   \newcommand{\DCGVer}[1]{}
   \newcommand{\NotDCGVer}[1]{#1}
   \newcommand{\DCGVer}[1]{#1}
   \newcommand{\NotDCGVer}[1]{}
\numberwithin{figure}{section}%
\numberwithin{table}{section}%
\numberwithin{equation}{section}%
   \theoremstyle{theorem} \newtheorem{theorem}{Theorem}[section]
   \newtheorem{lemma}[theorem]{Lemma}%
   \newtheorem{corollary}[theorem]{Corollary}
   \theoremstyle{remark}%
   \newtheorem{defn}[theorem]{Definition}
   \newtheorem*{remark:unnumbered}[theorem]{Remark}%
   \newtheorem{remark}[theorem]{Remark}%
   \newtheorem{observation}[theorem]{Observation}
   \spnewtheorem{defn}[theorem]{Definition}{\itshape}{\rmfamily}
   \spnewtheorem{observation}[theorem]{Observation}{\itshape}{\rmfamily}
\newcommand{\myqedsymbol}{\rule{2mm}{2mm}}
\newcommand{\si}[1]{#1}
\newcommand{\query}{{x}}
\newcommand{\weight}{\omega}%
\newcommand{\VorC}{\mathcal{V}}%
\newcommand{\VorX}[2][\!]{\VorC\pth{#2}}%
\newcommand{\WVorC}{\mathcal{W}}%
\newcommand{\WVorX}[2][\!]{\WVorC\pth{#2}}%
\newcommand{\VCell}{C}
\newcommand{\VCellPrefix}[1]{K_{#1}}
\newcommand{\face}{F}
\newcommand{\Event}{\mathcal{E}}
\newcommand{\pnt}{{p}}
\newcommand{\pntA}{{q}}
\newcommand{\pntB}{{z}}
\newcommand{\func}{f}
\newcommand{\dist}[2]{\left\| #1 - #2 \right\| }
\newcommand{\PntSet}{{P}}
\providecommand{\ds}{\displaystyle}
\newcommand{\SegSet}{{S}}%
\newcommand{\candid}{{L}}
\newcommand{\candidSet}{\candid}
\newcommand{\candidY}[2]{\candid\pth{#1, #2}}
\newcommand{\flirting}{candidate\xspace}%
\newcommand{\Flirting}{Candidate\xspace}%
\newcommand{\overlayX}[1]{\eta\pth{#1}}
\newcommand{\dsFunc}[2]{\lambda_{#1}\pth{#2}}
\newcommand{\HLinkShort}[2]{\hyperref[#2]{#1\ref*{#2}}}
\newcommand{\HLink}[2]{\hyperref[#2]{#1~\ref*{#2}}}
\newcommand{\HLinkPage}[2]{\hyperref[#2]{#1~\ref*{#2}%
      $_\text{p\pageref{#2}}$}}
\newcommand{\HLinkPageOnly}[1]{\hyperref[#1]{Page~\refpage*{#1}%
      $_\text{p\pageref{#1}}$}}
\newcommand{\HLinkSuffix}[3]{\hyperref[#2]{#1\ref*{#2}{#3}}}
\newcommand{\HLinkPageSuffix}[3]{\hyperref[#2]{#1\ref*{#2}%
      #3$_\text{p\pageref{#2}}$}}
\newcommand{\figlab}[1]{\label{fig:#1}}
\newcommand{\figref}[1]{\HLink{Figure}{fig:#1}}
\newcommand{\seclab}[1]{\label{sec:#1}}
\newcommand{\secref}[1]{\HLink{Section}{sec:#1}}
\newcommand{\corlab}[1]{\label{cor:#1}}
\newcommand{\corref}[1]{\HLink{Corollary}{cor:#1}}%
\newcommand{\apndlab}[1]{\label{apnd:#1}}
\newcommand{\apndref}[1]{\HLink{Appendix}{apnd:#1}}
\newcommand{\remlab}[1]{\label{rem:#1}}
\newcommand{\remref}[1]{\HLink{Remark}{rem:#1}}%
\newcommand{\lemlab}[1]{\label{lemma:#1}}
\newcommand{\lemref}[1]{\HLink{Lemma}{lemma:#1}}%
\newcommand{\lemrefpage}[1]{\HLinkPage{Lemma}{lemma:#1}}
\newcommand{\obslab}[1]{\label{observation:#1}}
\newcommand{\obsref}[1]{\HLink{Observation}{observation:#1}}
\newcommand{\thmlab}[1]{{\label{theo:#1}}}
\newcommand{\thmref}[1]{\HLink{Theorem}{theo:#1}}
\providecommand{\eqlab}[1]{}%
\renewcommand{\eqlab}[1]{\label{equation:#1}}
\newcommand{\Eqref}[1]{\HLinkSuffix{Eq.~(}{equation:#1}{)}}
\newcommand{\etal}{\textit{et~al.}\xspace}
\newcommand{\brc}[1]{\left\{ {#1} \right\}}
\newcommand{\cardin}[1]{\left| {#1} \right|}
\newcommand{\Set}[2]{\left\{ #1 \;\middle\vert\; #2 \right\}}
\newcommand{\pth}[1]{\mleft({#1}\mright)}
\newcommand{\ceil}[1]{\left\lceil {#1} \right\rceil}
\newcommand{\floor}[1]{\left\lfloor {#1} \right\rfloor}
\newcommand{\pbrcx}[1]{\left[ {#1} \right]}
\newcommand{\ExChar}{\mathop{\mathbf{E}}}%
\newcommand{\Ex}[2][\!]{\mathop{\mathbf{E}}#1\pbrcx{#2}}
\newcommand{\ProbChar}{\mathop{\mathbf{Pr}}}
\newcommand{\Prob}[1]{\mathop{\mathbf{Pr}}\!\pbrcx{#1}}
\newcommand{\sep}[1]{\,\left|\, {#1} \bigr.\right.}
\newcommand{\permut}[1]{\left\langle {#1} \right\rangle}%
\newcommand{\nfrac}[2]{#1/#2}%
\newcommand{\intersections}{\psi}%
\newcommand{\object}{\tau}
\newcommand{\SiteSet}{{S}}
\newcommand{\OSiteSet}{{T}}
\newcommand{\site}{{s}}
\newcommand{\siteA}{{r}}
\newcommand{\bisector}{\beta}
\newcommand{\sub}{{H}}
\newcommand{\worst}{g}
\newcommand{\worstX}[1]{g\pth{#1}}
\newcommand{\ElemSet}{{S}}
\newcommand{\RSample}{{R}}
\newcommand{\DefSet}[1]{D\pth{#1}}
\newcommand{\KillSet}[1]{\kappa\pth{#1}}
\newcommand{\Arr}{\mathcal{A}}
\renewcommand{\th}{\si{th}\xspace}
\newcommand{\eps}{\varepsilon}%
\renewcommand{\Re}{{\rm I\!\hspace{-0.025em} R}}
\definecolor{blue25}{rgb}{0,0,0.55}%
\newcommand{\emphic}[2]{%
   \textcolor{blue25}{%
      \textbf{\emph{#1}}}%
   \index{#2}}
\newcommand{\emphi}[1]{\emphic{#1}{#1}}
\newcommand{\cellPL}[2]{%
   {\mathop{{\WVorC}_{\mathrm{cell}}}\pth{#1,#2}}}
\newcommand{\VorCell}[3][\!]{\VorC_\mathrm{cell}\pth{#2, #3}}
\newcommand{\cell}{C}
\newcommand{\LE}[2][\!]{{{\mathcal{L}}}\pth{#2}}
\newcommand{\distSet}[2]{\mathsf{d}\pth{#1, #2}}
\newcommand{\distY}[2]{\left\| {#1} - {#2} \right\|}
\newcommand{\atgen}{\symbol{'100}} \newcommand{\SarielThanks}[1]{%
   \thanks{%
      Department of Computer Science; %
      University of Illinois; %
      201 N. Goodwin Avenue; %
      Urbana, IL, 61801, USA; %
      {\tt \si{sariel}\atgen{}\si{uiuc.edu}}; %
      {\tt \url{http://sarielhp.org}.}%
      #1%
   }%
}
\newcommand{\BenThanks}[1]{\thanks{%
      Department of Computer Science; %
      University of Illinois; %
      201 N. Goodwin Avenue; %
      Urbana, IL, 61801, USA; %
      {\tt \si{raichel}2\atgen{}\si{uiuc}.\si{edu}}; %
      {\tt \url{\si{http://illinois.edu/\string~\si{raichel2}}}}%
      . %
      #1}}
\newcommand{\RIC}{\textsf{RIC}\xspace}
\newcommand{\OSuffix}[1]{\OSiteSet_{#1}^{n}}
\newcommand{\Suffix}[1]{\SiteSet_{#1}^n}
\providecommand{\CNFX}[1]{ {\em{\textrm{(#1)}}}}%
\providecommand{\CNFSoCG}{\CNFX{SoCG}}%
 \newcommand{\pdf}{\xi}
\newcommand{\ObjX}[1]{\mathcal{T}\pth{#1}}
\newcommand{\FuncSet}{\mathcal{F}}%
\newcommand{\OFuncSet}{\mathcal{G}}%
\newcommand{\SetOf}[1]{\mathrm{set}\pth{#1}}
\newcommand{\remove}[1]{}
\newcommand{\edge}{e}%
\newcommand{\arcsX}[2][\!]{\mathrm{arcs}\pth{#2}}%
\newcommand{\ArrVDX}[1]{\Arr_{#1}^{||}}
\theoremstyle{nonumberplain}
\newtheorem{proof}{Proof:}%
\begin{document}

\title{On the Complexity of Randomly Weighted Multiplicative Voronoi
   Diagrams%
   \NotDCGVer{%
      \footnote{%
         Work on this paper was partially supported by NSF AF awards
         CCF-0915984, CCF-1217462, and CCF-1421231. 
         A preliminary version of this paper appeared in SoCG 2014
         \cite{hr-ocrwv-14}.%
      }%
   }
}

\author{%
   Sariel Har-Peled%
   \NotDCGVer{\SarielThanks{}}%
   \and%
   Benjamin Raichel%
   \NotDCGVer{\BenThanks{}}%
}

\DCGVer{%
   \institute{%
      S. Har-Peled%
      \and B. Raichel%
      \at Department of Computer Science; University of Illinois; 201
      N.
      Goodwin Avenue; Urbana, IL, 61801, USA.
      %
   }%
}

\date{\today}

\maketitle

\begin{abstract}
    We provide an $O(n \operatorname{polylog} n)$ bound on the
    expected complexity of the randomly weighted multiplicative
    Voronoi diagram of a set of $n$ sites in the plane, where the
    sites can be either points, interior-disjoint convex sets, or
    other more general objects.  Here the randomness is on the weight
    of the sites, not their location. This compares favorably with the
    worst case complexity of these diagrams, which is quadratic.  As a
    consequence we get an alternative proof to that of Agarwal
    \etal~\cite{ahks-urmsn-14} of the near linear complexity of the
    union of randomly expanded disjoint segments or convex sets (with
    an improved bound on the latter).  The technique we develop is
    elegant and should be applicable to other problems.
\end{abstract}


\section{Introduction}

One of the fundamental structures in Computational Geometry is the
\emph{Voronoi diagram} \cite{a-vdsfg-91,akl-vddt-13}; that is, for a
set of points $\PntSet$ in the plane, called sites, partition the
plane into cells such that each cell is the locus of all the points in
the plane whose nearest neighbor is a specific site in $\PntSet$.  In
the plane, the standard Voronoi diagram has linear combinatorial
complexity, but in higher dimensions the complexity is
$\Theta\pth{n^{\ceil{d/2}}}$.  Many generalizations of this
fundamental structure have been considered, including%
\begin{inparaenum}[(i)]
    \item adding weights,
    \item sites that are regions other than points,
    \item extensions to higher dimensions,
    \item other underlying metrics, and
    \item many others.
\end{inparaenum}

Even in the plane, some of these generalizations of Voronoi diagrams
lose their attractiveness as their complexity becomes quadratic in the
worst case.  However, as is often the case, constructions that realize
the quadratic complexity (of say, the weighted multiplicative Voronoi
diagram in the plane) are somewhat contrived, and brittle -- little
changes in the weight dramatically reduces the overall complexity.  To
quantify this observation, we consider here the expected complexity
rather than the worst case of such diagrams, where weights are being
assigned randomly.

\paragraph{Generalizations of Voronoi diagrams.}

In the \emph{additive weighted Voronoi diagram}, the distance to a
Voronoi site is the regular Euclidean distance plus some constant
(which depends on the site).  Additive Voronoi diagrams have linear
descriptive complexity in the plane, as their cells are star shaped
(and thus simply connected), as can be easily verified. This holds
even if the sites are arbitrary convex sets.  In the
\emph{multiplicative weighted Voronoi diagram}, for each site one
multiplies the Euclidean distance by a constant (again, that depends
on the site).  However, unlike the additive case, the worst case
complexity for multiplicative weighted Voronoi diagrams is
$\Theta(n^2)$ \cite{ae-oacwv-84}, even in the plane. In the weighted
case, the cells are not necessarily connected, and a bisector of two
sites is either a line or an (Apollonius) circle.

In the \emph{Power diagram}, each site $\site_i$ has an associated
radius $r_i$, and the distance of a point $\pnt$ to this site is
$\distY{\site_i}{\pnt}^2 -r_i^2$; that is, the squared length of the
tangent from $\pnt$ to the disk of radius $r_i$ centered at $\site_i$.
As such, Power diagrams allow including weight in the distance
function, while still having bisectors that are straight lines and
having linear combinatorial complexity overall.

Klein \cite{k-avda-88} introduced (and this was further refined by
Klein \etal~\cite{kln-avdr-09}) the notion of \emph{abstract Voronoi
   diagrams} to help unify the ever growing list of variants of
Voronoi diagrams which have been considered.  Specifically, a simple
set of axioms was identified, focusing on the bisectors and the
regions they define, which classifies a large class of Voronoi
diagrams with linear complexity (hence such axioms are not intended to
model, for example, multiplicative diagrams).

\paragraph{Randomization and Expected Complexity.}

In many cases, there is a big discrepancy between the worst case
analysis of a structure (or an algorithm) and its average case
behavior. This suggests that in practice, the worst case is seldom
encountered.  For example, recently, Agarwal \etal~\cite{aks-urmsn-13,%
   ahks-urmsn-14}, showed that the expected union complexity of a set
of randomly expanded disjoint segments is $O(n\log n)$, while in the
worst case the union complexity can be quadratic. In other words,
Agarwal \etal bounded the expected complexity of a level set of the
randomly weighted Voronoi diagram of disjoint segments.

\paragraph{If the sites are placed randomly.}

There is extensive work on the expected complexity of various
structures (including Voronoi diagrams) if the sites are being picked
randomly (but not their weight), see \cite{s-iig-53,rs-udkhv-63,%
   r-slcdn-70,d-hdvdl-89,ww-sg-93,sw-ig-93,obsc-stcav-00,%
   \si{h-ecrch-11},\si{dhr-ecvdt-12}} (this list is in no way
exhaustive). In many of these cases the resulting expected complexity
is dramatically smaller than its worst case analysis. For example, in
$\Re^d$, the Voronoi diagram of $n$ sites picked uniformly inside a
hypercube has $O(n)$ complexity (the constant depends exponentially on
the dimension), but the worst case complexity is, as already
mentioned, $\Theta(n^{\ceil{d/2}})$.  Intuitively, the low complexity
when the locations are randomly sampled is the result of the relative
uniformity of such samples. Interestingly, there is a subtle
connection between such settings and the behavior of grid points
\cite{h-osafd-98a}.

However, in this paper, site locations will be fixed and site weights
will be sampled (similar to the model of Agarwal
\etal~\cite{ahks-urmsn-14}).  As such, our argument cannot rely on the
spacial uniformity provided by location sampling. %
Nevertheless, the case of fixed (distinct) weights and sampled
locations will follow readily from our arguments for the sampled
weights and fixed locations case, see \secref{locations}.

\paragraph{Technical Challenges in the Multiplicative Setting.}

In this paper, we focus on the case of multiplicative weighted Voronoi
diagrams in the plane.  As mentioned above, such diagrams can have
quadratic complexity. It is thus natural to consider sampling (of the
weights) as a way to mitigate this, and argue for lower expected
complexity.  However, the multiplicative case poses a significant
technical hurdle in that nearest weighted neighbor relations are a
non-local phenomena.  Specifically, for a given site, unless all its
neighbors (in the unweighted diagram) have lower weight, its region of
influence cannot be locally contained, and its cell spills over --
potentially affecting points far away.  This non-locality makes
arguing about such diagrams technically challenging.  For example, the
work of Agarwal \etal~\cite{ahks-urmsn-14} required quite a bit of
effort to bound the level set of the multiplicative Voronoi diagram
for segments, and it is unclear how their analysis can be extended to
bound the complexity of the whole diagram.

\subsection*{Our Results.}

Consider a fixed distribution from which we sample weights.  Our main
result is that the expected complexity of the multiplicative weighted
Voronoi diagram of a set of sites is near linear, where the sites are
disjoint simply-connected compact regions in the plane.  We specify
the exact requirements on the sites in \secref{prelims} --- possible
sites include points, segments, or convex sets.

A simple consequence of our main result is that the expected
complexity of the union of randomly expanded disjoint segments or
convex sets is also near linear.  Specifically, our proof is
significantly simpler than the one of Agarwal
\etal~\cite{ahks-urmsn-14}.  Our bound is weaker by (roughly) an
$O(\log n)$ factor for the case of segments, but for convex sets we
improve the bound from $O(n^{1+\eps})$ to
$O(n \operatorname{polylog} n)$ (and our bound holds for the
complexity of the whole diagram, not only the level set).  Also,
similar to the work of Agarwal \etal, in \secref{permutation} we make
the observation that our results also hold for the more general case
where instead of sampling weights from a distribution, one is given a
fixed set of $n$ weights which are randomly permuted among the sites.

Our technique is rather versatile and should be applicable to other
well behaved distance metrics (for example, when each site has its own
additive constant which is included when measuring the distance to
that site).

To extend our result to more general sites, we prove that in these
settings the expected complexity of the overlay of the Voronoi cells
in a randomized incremental construction is $
O(\dsFunc{\intersections}{n} \log n)$ (see \lemrefpage{overlay:2}),
where $\dsFunc{\intersections}{n}$ is the length of a
Davenport-Schinzel sequence of order $\intersections$ with $n$
symbols, where $\intersections$ is some constant. This is an extension
of the result of Kaplan \etal~\cite{krs-omdri-11} to these more
general settings.

\paragraph{Significance of Results.}

As discussed above, due to non-locality, analyzing multiplicative
diagrams seems challenging.  In particular, we are unaware of any
previous subquadratic bounds for the expected complexity.  On the
practical side, the unwieldy complexity of the multiplicative diagram
(and its lack of a dual structure, similar to Delaunay triangulations)
has discouraged their use in favor of more well behaved diagrams, such
as the power diagram.  Our work indicates that using such diagrams in
the real world might be practical, despite their worst case quadratic
complexity.  In particular, our technique for bounding the expected
complexity immediately implies a near linear time randomized
incremental construction algorithm for computing the multiplicative
diagram.

\paragraph{Outline of technique.}

Consider the case of bounding the expected complexity of the Voronoi
diagram of a set $\PntSet$ of $n$ multiplicatively weighted points
(i.e., sites) in the plane, where the weights are being picked
independently from the same distribution. The key ideas behind the new
approach are the following.
\begin{compactenum}[\;\;(A)]
    \item \textbf{Candidate Sets.} Consider any point $\query$ in the
    plane, and let $\pnt$ be its nearest neighbor in $\PntSet$ under
    the weighted distance.  Now, if $\pnt$ is the nearest neighbor of
    $\query$ then for all other sites in $\PntSet$ either $\pnt$ has
    smaller weight, or smaller distance to $\query$.  Thus for each
    point $\query$ in the plane one can define its candidate set,
    which consists of all sites $\pntB \in \PntSet$ such that for all
    other sites in $\PntSet$ either $\pntB$ has smaller weight or
    smaller distance to $\query$.  Saying it somewhat differently,
    plotting the points of $\PntSet$ in the parametric plane, where
    one axis is the distance from $\query$, and the other axis is
    their weight, the candidate set is all the minima points (i.e.,
    they are the vertices of the lower left staircase of the point
    set, and they are not dominated in both axes by any other point).
    We show that when weights are randomly sampled, with high
    probability, for all points in the plane the candidate set has at
    most logarithmic size (this is well known, and we include the
    proof for the sake of completeness).
    
    \item \textbf{Gerrymandering the plane.}  Next, we partition the
    plane into a small number of regions such that the candidate set
    is fixed within each region.  Specifically, if one can break the
    plane into $m$ such uniform candidate regions, then the worst case
    complexity of the Voronoi diagram is $O(m\log^2 n)$, with high
    probability, since all candidate sets are of size at most $O( \log
    n )$, and the worst case complexity of the multiplicative Voronoi
    diagram of a weighted set of points is quadratic.
    
    \item \textbf{Randomized Incremental Campaigning.} %
    The main challenge, as frequently is the case, is to do the
    gerrymandering.  To this end, consider adding the sites in order
    of increasing weight.  When the $i$\th site is added, it has
    higher weight than the sites already added, and lower weight than
    the sites which have not been added yet.
       Therefore, the $i$\th site is in the candidate set of a point in
       the plane, when it is the nearest neighbor of the point among the
       first $i$ sites.
    In other words, the points in the Voronoi cell of the $i$\th site
    in the Voronoi diagram of the first $i$ sites.  Next, consider the
    overlay of the $n$ Voronoi cells formed by taking all
    $i \in \brc{1, \ldots, n}$. Observe that each face of this overlay
    has the same candidate set.  For the case of points, Kaplan
    \etal~\cite{krs-omdri-11} proved that this overlay has
    $O(n \log n)$ expected complexity. This implies immediately an
    $O(n \log^3 n)$ bound on the expected complexity of the
    multiplicative Voronoi diagram.
\end{compactenum}

\paragraph{Organization.}

In \secref{prelims} we introduce notation and definitions used
throughout the paper.  In \secref{result} we introduce the notion of
candidate sets, and show how partitioning the plane into a near linear
number of regions, such that each region has the same candidate set
implies our result on the near linear expected complexity of the
multiplicative Voronoi diagram of sites.  Specifically, the
partitioning used is the overlay of Voronoi cells in a randomized
incremental construction (\RIC), and in \secref{proof} we describe in
detail how the expected complexity of such an overlay is near linear.
In \secref{resultApplications} we state our main result, and present a
number of specific applications of our technique.  In
\secref{permutation} we observe that instead of sampling weights our
technique extends to the more general case of permuting a fixed set of
weights among the points.  In \apndref{lower:bound}, we show that the
overlay of Voronoi cells in \RIC is $\Omega(n \log n)$, implying that
the upper bound of Kaplan \etal~\cite{krs-omdri-11} is tight in this
case.

\section{Preliminaries}
\seclab{prelims}

Below we define Voronoi diagrams and related objects in a rather
general way to encompass the various applications of our technique.
For simplicity the reader is encouraged to interpret these definitions
in terms of Voronoi diagrams of points (or less trivially disjoint
segments).

\paragraph{Notation.}

We use $\OSiteSet = \permut{\site_1, \ldots, \site_n}$ to denote a
permutation of a set $\SiteSet$ of $n$ objects, and
$\OSiteSet_i = \permut{\site_1, \ldots, \site_i}$ to denote the
\emphi{prefix} of this permutation of length $i$. Similarly, we use
$\OSuffix{i+1} = \permut{\site_{i+1}, \site_{i+2}, \ldots, \site_n} $
to denote a \emphi{suffix} of $\OSiteSet$. When we care only about
what elements appear in a permutation, $\OSiteSet$, but not their
internal ordering, we use the notation $\SiteSet = \SetOf{ \OSiteSet}$
to denote the associated set.  As such,
$\SiteSet_i = \SetOf{\OSiteSet_i}$ is the \emphi{unordered prefix} of
length $i$ of $\OSiteSet$, and
$\Suffix{i+1} = \SetOf{\OSuffix{i+1}} = \brc{\site_{i+1}, \site_{i+2},
   \ldots, \site_n} $ is the \emphi{unordered suffix}.
   
\paragraph{Arrangements.}

As it will be used throughout the paper, we now define the standard
terminology of arrangements (see \cite{sa-dsstg-95,h-gaa-11}).  Given
a set $\mathsf{S}$ of $n$ segments in the plane, its
\emphi{arrangement}, denoted by $\Arr( \mathsf{S} )$, is the
decomposition of the plane into faces, edges and vertices. The
vertices $\Arr( \mathsf{S} )$ are the endpoints and the intersection
points of the segments of $\mathsf{S}$, the edges are the maximal
connected portions of the segments not containing any vertex, and the
faces are the connected components of the complement of the union of
the segments of $\mathsf{S}$.  For a set of polygons, we can
analogously define its arrangement by letting $\mathsf{S}$ be the
union of all boundary segments of the polygons.

\subsection{Voronoi diagrams}

Let $\SiteSet = \brc{\site_1, \ldots, \site_n}$ be a set of $n$ sites
in the plane.  Specifically, the sites are disjoint simply-connected
compact subsets of $\Re^2$.  For a closed set $Y \subseteq \Re^2$, and
any point $\query \in \Re^2$, let
$\distSet{\query}{Y} = \min_{y \in Y} \dist{\query}{y}$ denote the
\emphi{distance} of $\query$ to the set $Y$.  For any two sites
$\site, \siteA\in \SiteSet$, we define their \emphi{bisector}
$\bisector(\site, \siteA)$ as the set of points $\query\in \Re^2$ such
that $\distSet{\query}{\site} = \distSet{\query}{\siteA}$. Each
$\site\in \SiteSet$, induces the function
\begin{math}
    \func_\site(\query) = \distSet{\query}{\site},
\end{math}
where $\query$ is any point in the plane.  For any subset
$\sub\subseteq \SiteSet$ and any site $\site\in \sub$,
the \emphi{Voronoi cell} of $\site$ with respect to $\sub$,
denoted $\VorCell{\site}{\sub}$, is the subset of $\Re^2$ whose closest site
in $\sub$ is $\site$, i.e.
\begin{math}
    \VorCell{\site}{\sub} = \Set{ \query \in \Re^2}{ \forall \siteA\in
       \sub \;\; \func_\site(\query) \leq \func_\siteA(\query)}%
    \Bigr.%
    .
\end{math}
Finally, for any subset $\sub\subseteq \SiteSet$, 
the Voronoi diagram of $\sub$, denoted $\VorX{\sub}$, is the
partition of the plane into Voronoi cells induced by the minimization
diagram of the set of functions
$\Set{\func_\site}{\bigl. \site\in \sub}$.


\begin{remark}%
    \remlab{assumptions}%
    Throughout the paper we require the following from the bisectors
    and Voronoi cells.
    \begin{compactenum}[\qquad (A)]
        \item For any two sites $\site, \siteA\in \SiteSet$, their
        bisector $\bisector(\site, \siteA)$ is a simple curve
        (i.e. the image of a continuous map from the unit interval to
        $\Re^2$) whose removal splits the plane into exactly two
        unbounded regions\footnote{That is, under the stereographic
           projection of the plane to the sphere, the bisector is a
           simple closed Jordan curve through the north pole.}.
        \item Each bisector is of constant complexity, that is it has
        a constant number of extremal points in the direction of (say)
        the $x$-axis\footnote{One can assume the bisectors contain no
           vertical segments, since otherwise we can slightly rotate
           the plane, and hence extremal points are well defined.}.
        \item Any two distinct bisectors intersect at most a constant
        number of times.
        \item For any site $\site\in \SiteSet$ and any subset
        $\sub\subseteq \SiteSet$, the set $\VorCell{\site}{\sub}$ is a
        simply connected subset of the plane.
        \item For any subset $\sub\subseteq \SiteSet$,
           the Voronoi cells cover the plane; that is,
           $\cup_{\site \in \sub} \,\VorCell{\site}{\sub} = \Re^2$.
    \end{compactenum}
\end{remark}

One can view the union, $U$, of the boundaries of the cells in a
Voronoi diagram as a planar graph.  Specifically, define a
\emphi{Voronoi vertex} as any point in $U$ which is equidistant to
three sites in $\SiteSet$ (which happens at the intersection of two
bisectors).  For simplicity, we make the general position assumption
that no point is equidistant to four or more sites in the plane.
Furthermore, define a \emphi{Voronoi edge} as any maximal connected
subset of $U$ which does not contain a Voronoi vertex.  (Note that in
order for each edge to have two endpoints we must include the
``point'' at infinity, i.e. the graph is defined on the stereographic
projection of the plane onto the sphere.)

The above conditions imply that the Voronoi diagram of any subset of
sites is an \emphi{abstract Voronoi diagram} (actually such diagrams
are more general).  It is known that for such diagrams
the overall complexity of the Voronoi graph they define is linear
\cite{kln-avdr-09}, i.e. for a set of $n$ sites the number of Voronoi
vertices, edges, and faces is $O(n)$.  In general Voronoi diagrams,
edges may have more than a constant number of $x$-extremal points.
However, since we assumed each bisector has a constant number of
$x-$extremal points, and Voronoi edges are contiguous subsets of
bisectors, there are only a constant number of $x$-extremal points on
any edge.  Therefore, asymptotically, the complexity of
   the vertical decomposition of the Voronoi diagram, and thus the
   diagram itself, is bounded by the complexity of this (Voronoi)
   graph.

\subsection{Multiplicative weighted Voronoi diagrams}

As before, let $\SiteSet$ be a set of $n$ weighted sites in the plane,
where $\weight_i > 0$ is the weight associated with the $i$\th site
$\site_i$.  Consider the weighted Voronoi diagram of $\SiteSet$,
denoted by $\WVorX{\SiteSet}$. Specifically, for $i=1,\ldots, n$, the
site $\site_i$ induces a distance function $\func_i(\query) =
\weight_i \, \distSet{\query}{\site}$. The \emphi{multiplicative
   weighted Voronoi} diagram of $\SiteSet$ is the partition of the
plane induced by the minimization diagram of the distance functions
$\func_1, \ldots, \func_n$. The \emphi{weighted Voronoi cell} of
$\site_i$ is
\begin{align}
    \VCell_i = \Set{ \query \in \Re^2 }{ \forall j \quad
       \func_i(\query) \leq \func_j(\query)}.
    \eqlab{w:cell}
\end{align}
For a multiplicative Voronoi diagram, the cells are not necessarily
connected.

\begin{remark}%
    \remlab{assumptions:2}%
    In addition to the conditions listed in \remref{assumptions}, for
    the unweighted case, we also require the following of the weighted
    diagram (for any positive weight assignment):
    \begin{compactenum}[\qquad(A)]
        \item Each weighted bisector has a constant number of extremal
        points in the direction of the $x$-axis.
        \item Any two distinct weighted bisectors intersect at most a
        constant number of times.
    \end{compactenum}
\end{remark}


Let $\worstX{n}$ denote the worst case complexity of the
multiplicative weighted Voronoi diagram.  The analysis below requires
a polynomial bound on $\worstX{n}$.  It is not hard to see that the
conditions above on bisectors already imply a bound of $\worstX{n} =
O(n^4)$.

\subsubsection{Assigning weights randomly}

In the following, we use \emphi{distribution} to refer to any
probability distribution defined over the positive real numbers (i.e.,
$\Re^+$). We use $\pdf$ to denote this distribution, which might be
continuous or discrete.

Let $\SiteSet$ be a given set of $n$ sites in the plane.  We assign
each site of $\SiteSet$ a random weight sampled independently from
$\pdf$.  We order the sites of $\SiteSet$ by their weight, and let
$\site_i$ be the site assigned the $i$\th smallest weight, and let
$\weight_i$ denote this weight, for $i=1, \ldots, n$. If the weights
assigned are not unique, we randomly permute each cluster of sites
that are assigned the same weight internally\footnote{Specifically,
   for every site $\site$ generate, in addition to its weight
   $\weight$ chosen from $\pdf$, a secondary weight $\weight'$ which
   is picked uniformly at random from the interval $[0,1]$. Now order
   the sites in lexicographical ordering of the pairs $(\weight,
   \weight')$.}.  The resulting ordering $\OSiteSet = \permut{\site_1,
   \ldots,\site_n}$ is a (uniform) random permutation defined over the
sites of $\SiteSet$.

\section{Bounding the complexity of the randomly %
   weighted diagram}
\seclab{result}

Let $\SiteSet$ be a weighted set of sites in the plane, whose ordering
by increasing weight is $\OSiteSet = \permut{\site_1, \ldots,
   \site_n}$ (where $\weight_i$ is the weight of $\site_i$).  For any
point $\query \in \Re^2$, we write $\cellPL{\query}{\SiteSet}$ to
denote the Voronoi cell of $\WVorX{\SiteSet}$ that contains $\query$,
i.e.  $\cellPL{\query}{\SiteSet} = \VCell_i$ is induced by the site
$\site_i = \arg \min_{\site_j \in \SiteSet}
\weight_j\dist{\query}{\site_j}$, see \Eqref{w:cell} (if $\query$ is a
boundary point, then we arbitrarily pick one of the equidistant
sites).

\subsection{\Flirting sets}

\begin{defn}
    Let $\OSiteSet = \permut{\site_1, \ldots, \site_n}$ be an ordered
    set of $n$ sites in the plane. For any point $\query$ in the
    plane, the \emphi{\flirting set} of $\query$, denoted by
    $\candidY{\query}{\OSiteSet}$, is the set of all sites $\site_i
    \in \OSiteSet$, such that $\dist{\query}{\site_i} =
    \distSet{\query}{\OSiteSet_i}$, for $i=1,\ldots, n$.  In words,
    $\site_i$ is in $\candidY{\query}{\OSiteSet}$ if it is the closest
    site to $\query$ in its prefix $\OSiteSet_i$.
\end{defn}

A prerequisite for a site $\site_j$ of the weighted site set
$\SiteSet$ to be the nearest site under weighted distances to
$\query$, is that $\site_j$ is in the \flirting set
$\candidY{\query}{\OSiteSet}$.

\begin{lemma}
    \lemlab{influence}%
    %
    For a point $\query$ in the plane, if
    $\,\cellPL{\query}{\SiteSet} = \cell_j$, then $\site_j$ is in
    $\candidY{\query}{\OSiteSet}$, where $\OSiteSet$ is the ordering
    of $\SiteSet$ by increasing weight.
\end{lemma}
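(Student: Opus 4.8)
The plan is to argue by contradiction, deriving from the failure of $\site_j \in \candidY{\query}{\OSiteSet}$ that some earlier (lighter) site strictly beats $\site_j$ under the weighted distance at $\query$, which contradicts $\cellPL{\query}{\SiteSet} = \cell_j$.

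Concretely, suppose $\site_j \notin \candidY{\query}{\OSiteSet}$. Since $j \le j$ always gives $\dist{\query}{\site_j} \ge \distSet{\query}{\OSiteSet_j}$, the failure means $\dist{\query}{\site_j} > \distSet{\query}{\OSiteSet_j} = \min_{k \le j}\dist{\query}{\site_k}$, so there is an index $i < j$ with $\dist{\query}{\site_i} < \dist{\query}{\site_j}$. The first thing I would note is that $i < j$ forces $\weight_i \le \weight_j$: the sites are indexed by increasing weight, and when weights coincide the ordering is the lexicographic one on the pairs $(\weight,\weight')$, which still refines the weight order. Next I would chain the two facts: since $\weight_j > 0$ and $\dist{\query}{\site_i} < \dist{\query}{\site_j}$ we get $\weight_j\dist{\query}{\site_i} < \weight_j\dist{\query}{\site_j}$, while $0 \le \weight_i \le \weight_j$ and $\dist{\query}{\site_i} \ge 0$ give $\weight_i\dist{\query}{\site_i} \le \weight_j\dist{\query}{\site_i}$. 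Combining, $\func_i(\query) = \weight_i\dist{\query}{\site_i} < \weight_j\dist{\query}{\site_j} = \func_j(\query)$. But $\cellPL{\query}{\SiteSet} = \cell_j$ means, by \Eqref{w:cell}, that $\func_j(\query) \le \func_k(\query)$ for every $k$, in particular $\func_j(\query) \le \func_i(\query)$ — a contradiction. Hence no such $i$ exists, i.e.\ $\dist{\query}{\site_j} \le \dist{\query}{\site_i}$ for all $i \le j$, so $\dist{\query}{\site_j} = \distSet{\query}{\OSiteSet_j}$ and $\site_j \in \candidY{\query}{\OSiteSet}$.

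There is essentially no real obstacle here; the only place needing a moment's care is the degenerate case $\dist{\query}{\site_i} = 0$ (that is, $\query$ lies in the site $\site_i$), but the inequality chain above already covers it, since it used only $\weight_j > 0$ and $\dist{\query}{\site_i} < \dist{\query}{\site_j}$ — and $\dist{\query}{\site_j} > 0$ holds automatically in that case because the sites are pairwise disjoint. The same remark handles the boundary case in which $\query$ lies on the weighted Voronoi diagram and $\cell_j$ was selected arbitrarily among equidistant cells, because the argument only ever invokes the non-strict inequality $\func_j(\query) \le \func_i(\query)$.
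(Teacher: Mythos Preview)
Your argument is correct and is essentially the same contrapositive argument the paper gives: both show that any site $\site_i$ with $i<j$ that is (unweighted) closer to $\query$ than $\site_j$ would also beat $\site_j$ in weighted distance, contradicting $\cellPL{\query}{\SiteSet}=\cell_j$. If anything, your version is slightly more careful about the tie case $\weight_i=\weight_j$, which the paper's proof glosses over by writing $\weight_i<\weight_j$.
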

\begin{proof}
    Let $\site_j$ be the nearest weighted site to $\query$ (in the
    weighted Voronoi diagram of $\SiteSet$).  Consider any other site
    $\site_i$ such that $\weight_i<\weight_j$. This implies that $i<j$
    because of the ordering of $\OSiteSet$.  Observe that $\site_i$ is
    further away from $\query$ than $\site_j$ (i.e.,
    $\dist{\query}{\site_j} < \dist{\query}{\site_i}$), since
    otherwise
    \begin{align*}
        \func_j\pth{\query} = \weight_j \dist{\query}{\site_j} \geq
        \weight_j \dist{\query}{\site_i} > \weight_i
        \dist{\query}{\site_i} = \func_i\pth{\query},
    \end{align*}
    which is a contradiction.  In other words, $\site_j$ must be the
    (unweighted) closest point to $\query$ in its prefix
    $\OSiteSet_j$.  %
    \DCGVer{{\qed}}
\end{proof}

We next prove that, with high probability, the candidate set is
logarithmic in size for all points in the plane. To this end, we need
the following helper lemma.

\newcommand{\SuffixEvent}{F}
\newcommand{\SuffixE}[1]{\SuffixEvent_{#1}}
\newcommand{\SuffixSet}{\mathcal{F}}

\begin{lemma}
    \lemlab{high:prob}%
    Let $\Pi=\permut{\pi_1, \ldots, \pi_n}$ be a random permutation of
    $\brc{1, \ldots, n}$, and let $X_i$ be an indicator variable which
    is $1$ if $\pi_i$ is the smallest number among
    $\pi_1, \ldots, \pi_i$, for $i=1,\ldots, n$.  Let
    $Z=\sum_{i=1}^n X_i$, then $Z=O(\log n)$, with high probability
    (i.e., $\geq 1- 1/n^{c}$, for any constant $c$).
\end{lemma}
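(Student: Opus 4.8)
The plan is to combine two classical facts about a uniformly random permutation: the record indicators $X_1,\dots,X_n$ are \emph{mutually independent}, and a Chernoff bound then yields the stated tail estimate.

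First I would observe that, by symmetry, each of the first $i$ positions is equally likely to hold the minimum of $\brc{\pi_1,\dots,\pi_i}$, so $\Prob{X_i=1}=1/i$, and hence $\Ex{Z}=\sum_{i=1}^{n}1/i=H_n\le 1+\ln n$.

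The key step is the independence of $X_1,\dots,X_n$. For each $i$, let $R_i\in\brc{1,\dots,i}$ be the rank of $\pi_i$ within the set $\brc{\pi_1,\dots,\pi_i}$, so that $X_i=1$ precisely when $R_i=1$. The map $\permut{\pi_1,\dots,\pi_n}\mapsto(R_1,\dots,R_n)$ is a bijection from the $n!$ permutations of $\brc{1,\dots,n}$ onto the product set $\prod_{i=1}^{n}\brc{1,\dots,i}$ --- this is the factorial-number-system (Lehmer code) encoding, and one reconstructs $\pi$ from $(R_1,\dots,R_n)$ by inserting the values in turn at the prescribed relative ranks. Since $\Pi$ is uniform, $(R_1,\dots,R_n)$ is uniform on the product set; therefore the $R_i$ are independent with $R_i$ uniform on $\brc{1,\dots,i}$, and consequently the $X_i$ are independent Bernoulli variables with $\Prob{X_i=1}=1/i$.

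Finally I would invoke a Chernoff bound for a sum of independent Bernoulli variables: writing $\mu=\Ex{Z}\le 1+\ln n$, the standard estimate $\Prob{Z\ge t}\le 2^{-t}$ (valid once $t\ge 2e\mu$) shows that for any constant $c$ there is a constant $a=a(c)$ with $\Prob{Z\ge a\ln n}\le n^{-c}$, since $a\ln n$ can be made to exceed $2e\mu$ and to satisfy $2^{-a\ln n}\le n^{-c}$ simultaneously; hence $Z=O(\log n)$ with probability at least $1-n^{-c}$. I expect the only real obstacle to be justifying the independence of the $X_i$; the remainder is a routine expectation computation together with a textbook Chernoff bound. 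An alternative to the Lehmer-code bijection is to reveal the permutation in the order $\pi_n,\pi_{n-1},\dots$ and argue that, conditioned on the relative order of $\pi_1,\dots,\pi_{i-1}$, the element $\pi_i$ is equally likely to occupy any of the $i$ relative ranks --- but the bijection argument seems cleanest.
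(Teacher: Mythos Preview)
Your proof is correct and follows essentially the same approach as the paper: establish mutual independence of the record indicators $X_i$, observe that $\Ex{Z}=\sum_i 1/i=\Theta(\log n)$, and apply a Chernoff bound. The only difference is in how independence is argued --- you use the Lehmer-code bijection, whereas the paper uses the backward-revelation argument (generating $\pi_n,\pi_{n-1},\ldots$ in turn and conditioning on suffix events), which is precisely the alternative you mention at the end.
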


\begin{proof}
    This is well known \cite[Section 3.4]{m-cgitr-94}, and we include
    the proof for the sake of completeness.  Let $\Event_i$ be the
    event that $X_i=1$, for $i=1,\ldots, n$.  We first show the events
    $\Event_1, \ldots, \Event_n$ are independent (implying the $X_i$
    are independent). Indeed, conceptually, generate the permutation
    as follows: Randomly pick a permutation of the given numbers, and
    set the first number to be $\pi_n$. Next, pick a random
    permutation of the remaining numbers and set the first number as
    the penultimate number (i.e., $\pi_{n-1}$) in the output
    permutation. Repeat this process till we generate the whole
    permutation.

    Observe that by our thought experiment, regardless of the elements
    that appear in the suffix $\permut{\pi_{i_1+1},\ldots,\pi_n}$,
    there is exactly one minimum value in the remaining elements, and
    these remaining elements are randomly permuted before determining
    $\pi_i$.  Now, consider arbitrary indices
    $1 \leq i_1 < i_2 < \ldots < i_k \leq n$.  Clearly, the event
    $\Event_{i+1}$ is not influenced by the exact choice of the suffix
    $\pi_{i+1}, \ldots, \pi_{n}$, and as such
    \begin{math}
        \Prob{ \Event_{i_1} \!\sep{ \pi_{i_1+1}, \ldots, \pi_n}}%
        =%
        \Prob{\bigl. \Event_{i_1}}%
        =%
        1/i_1.
    \end{math}
    Furthermore, we have
    \begin{math}
        \Prob{ \Event_{i_1} \sep{ \Event_{i_2} \cap \ldots \cap
              \Event_{i_k} }}%
        =%
        \Prob{\bigl. \Event_{i_1}}%
        =%
        1/i_1
    \end{math}
    as can be easily verified\footnote{A formal proof of this is
       somewhat tedious. Indeed, given $y_{t}, \ldots, y_n$, let
       \begin{math}
           \Bigl.  \SuffixE{t} = \SuffixE{t}\pth{ y_{t}, \ldots, y_n}
           = \bigl\{ \permut{\pi_1,\ldots, \pi_n} \bigm| \pi_{t} =
           y_{t}, \pi_{t+1} = y_{t+1}, \ldots, \pi_{n} = y_{n} \bigr\}
       \end{math}
       denote the \emph{suffix event}, where the specific values of
       the $\pi_{t},\ldots, \pi_n$ are fully specified.  By the
       above, we have
       \begin{math}
           \Prob{ \bigl.  \Event_{i_1} \sep{ \SuffixE{i_1+1} }} =
           1/i_1.
       \end{math}
       Observe that the event
       \begin{math}
           \Event_{i_2} \cap \ldots \cap \Event_{i_k}
       \end{math}
       is the disjoint union of suffix events. Indeed, for a specific
       value of $y_{i_2},\ldots, y_n$, either all the permutations of
       $\SuffixE{i_2}(y_{i_2},\ldots, y_n)$ or none, are in
       $\Event_{i_2} \cap \ldots \cap \Event_{i_k}$.  As such, let
       $\SuffixSet$ be the set of all the suffix events
       that are in
       $\Event_{i_2} \cap \ldots \cap \Event_{i_k}$.  Now, we have
       \begin{math}
           \Prob{ \Event_{i_1} \sep{ \Event_{i_2} \cap \ldots \cap
                 \Event_{i_k} }}%
           =%
           \sum_{\SuffixEvent \in \SuffixSet} \Prob{
              \Event_{i_1} \sep{ \SuffixEvent}} \Prob{\SuffixEvent
              \sep{ \Event_{i_2} \cap \ldots \cap \Event_{i_k}}}%
           =%
           \sum_{\SuffixEvent \in \SuffixSet} \Prob{
              \Event_{i_1} \sep{ \SuffixEvent}} \Prob{\SuffixEvent
              \sep{ \Event_{i_2} \cap \ldots \cap \Event_{i_k}}}%
           =%
           {1}/{i_i}.
       \end{math}
       (This is similar in spirit to arguments used in martingales,
       where $F_n, F_{n-1}, \ldots$ is a filter.)  }.
    %
    As such, by induction, we have
    \begin{align*}
        \Prob{\bigl. \bigcap\nolimits_{j=1}^k \Event_{i_j} }%
        &=%
        \Prob{\bigl. \Event_{i_1} \sep{ \bigcap\nolimits_{j=2}^k
              \Event_{i_j} }}%
        \Prob{ \bigcap\nolimits_{j=2}^k \Event_{i_j}}%
        %
        \DCGVer{\\&}%
        =%
        \Prob{\bigl. \Event_{i_1}}%
        \Bigl. \ProbChar \bigl[ \bigcap\nolimits_{j=2}^k \Event_{i_j}
        \bigr] \biggr.%
        =%
        \prod_{j=1}^k \Prob{\Event_{i_j}}%
        =%
        \prod_{j=1}^k \frac{1}{i_j}.
    \end{align*}
    We conclude that the variables $X_1, \ldots, X_n$ are independent.
    The claim now follows from the Chernoff bound since
    \begin{math}
        \mu%
        =%
        \Ex{\bigl. Z}%
        =%
        \sum_i \Bigl. \Ex{\bigl. X_i}%
        =%
        \sum_{i=1}^n {1}/{i}%
        =%
        \Theta(\log n).
    \end{math}%
    \DCGVer{{\qed}}
\end{proof}

\begin{corollary}
    \corlab{forAll}%
    Let $\SiteSet$ be a randomly weighted set of $n$ sites in the
    plane, and let $\OSiteSet = \permut{\site_1, \ldots, \site_n}$ be
    the sorted ordering of $\SiteSet$ by increasing weight.
    Simultaneously for all points in the plane their \flirting set for
    $\OSiteSet$ is of size $O(\log n)$, with high probability.
\end{corollary}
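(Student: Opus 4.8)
The plan is to replace the continuum of points in the plane by polynomially many combinatorial types, and then apply \lemref{high:prob} together with a union bound. The starting observation is that whether a site $\site_i$ belongs to $\candidY{\query}{\OSiteSet}$ — i.e., whether $\site_i$ is closer to $\query$ than each of $\site_1, \ldots, \site_{i-1}$ — depends only on the linear order in which the sites of $\SiteSet$ appear when sorted by distance from $\query$, together with the fixed (but random) order $\OSiteSet$ of $\SiteSet$ by increasing weight. Writing $\mathrm{rank}_\query(\site)$ for the position of $\site$ in this distance ordering, $\cardin{\candidY{\query}{\OSiteSet}}$ is exactly the number of left-to-right minima of the sequence $\mathrm{rank}_\query(\site_1), \mathrm{rank}_\query(\site_2), \ldots, \mathrm{rank}_\query(\site_n)$ (assuming distinct distances; the general case is handled below).

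Next I would argue that this distance ordering is invariant as $\query$ ranges over a region not crossed by any bisector: by \remref{assumptions}(A) every bisector $\bisector\pth{\site, \siteA}$, $\site \neq \siteA \in \SiteSet$, splits the plane into two parts, on each of which $\mathrm{sign}\pth{\distSet{\query}{\site} - \distSet{\query}{\siteA}}$ is constant. Consequently, inside any face of the arrangement $\Arr\pth{\BisectorSet}$ of the set $\BisectorSet = \Set{\bisector\pth{\site,\siteA}}{\site,\siteA\in\SiteSet,\ \site\neq\siteA}$ of all $\binom{n}{2}$ bisectors, the distance ordering of $\SiteSet$ from $\query$ — and hence the \flirting set, viewed as a function of the random weights — does not depend on which point of the face is chosen. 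Since each bisector has constant complexity and any two bisectors cross a constant number of times (\remref{assumptions}(B) and (C)), $\Arr\pth{\BisectorSet}$ has $O(n^4)$ faces, edges, and vertices in total.

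I would then fix a representative point in each cell of $\Arr\pth{\BisectorSet}$. As the random weight assignment makes $\OSiteSet$ a uniformly random permutation of $\SiteSet$ (ties in weight broken uniformly, as in \secref{prelims}), for each fixed cell the sequence $\mathrm{rank}_\query(\site_1), \ldots, \mathrm{rank}_\query(\site_n)$ is a uniformly random permutation of $\brc{1,\ldots,n}$ — here ties in distance are broken by a fixed rule, which by general position (no point equidistant from four sites, so at most three sites realize any single distance from a point) changes $\cardin{\candidY{\query}{\OSiteSet}}$ by at most a factor of three. Hence \lemref{high:prob} applies: for a constant $c$ to be chosen, there is a constant $C = C(c)$ so that for each fixed cell the \flirting set of its representative point has size at most $C\log n$ with probability at least $1 - 1/n^{c}$. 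A union bound over the $O(n^4)$ cells gives $\cardin{\candidY{\query}{\OSiteSet}} \leq C\log n$ for all $\query$ in the plane simultaneously with probability at least $1 - O\pth{n^{4-c}}$; taking $c$ large enough yields the claimed bound.

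The step I expect to be the crux is the second one: even though there are uncountably many query points, the \flirting set realizes only $O(n^4)$ distinct values, one per face of the bisector arrangement, which is exactly what legitimizes the union bound over a finite family. The only other point requiring attention is the treatment of query points lying on bisectors, where distances can tie, but this is harmless thanks to the general position assumption and costs only the constant factor noted above.
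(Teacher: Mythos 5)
Your proposal is correct and follows essentially the same path as the paper's own proof: fix a representative point in each of the $O(n^4)$ faces of the arrangement of all $\binom{n}{2}$ bisectors (within each face the distance ordering, and hence the candidate set as a function of the random weight permutation, is fixed), apply \lemref{high:prob} to each representative, and union-bound. The only difference is that you spell out the tie-breaking at face boundaries, which the paper sidesteps by choosing representatives; this is a harmless refinement, not a different argument.
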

\begin{proof}
    Consider any fixed point $\query$ in the plane.  Since
    $\OSiteSet = \permut{\site_1, \ldots, \site_n}$ is a random
    permutation of $\SiteSet$, the sequence
    $\dist{\query}{\site_1}, \ldots, \allowbreak
    \dist{\query}{\site_n}$
    is a random permutation of the distance values from $\query$ to
    the sites in $\SiteSet$.  Therefore, by the definition of
    the \flirting set and \lemref{high:prob}, we have
    $\cardin{\candidY{\query}{\OSiteSet}} = O(\log n)$ with high
    probability.
    
    Consider the arrangement of all the (unweighted) bisectors of all
    the pairs of sites in $\SiteSet$.  There are $n$ sites and
    $\binom{n}{2}$ bisectors.  As such, there are $O\pth{n^4}$
    vertices in this arrangement, as by assumption each pair of
    bisectors intersect at most a constant number of times, and each
    bisector has a constant number of $x$-extremal points.
       Therefore, the total complexity of this arrangement is
       $O\pth{n^4}$.

    Within each face of this arrangement, the \flirting set cannot
    change since all points in this face have the same ordering of
    their distances to the sites in $\SiteSet$.  So pick a
    representative point for each of the $O(n^4)$ faces.  For any such
    representative, with probability $\leq 1/n^c$, the \flirting
    set has $>\alpha\log(n)$ sites, for any constant $c$ of our
    choosing (where $\alpha$ is a constant determined by the Chernoff
    bound that depends only on $c$).  Therefore, by choosing $c$ to
       be sufficiently large, taking the union bound on
    these bad events, and then taking the complement, the claim
    follows. %
    \DCGVer{{\qed}}
\end{proof}

\subsection{Getting a compatible partition}
\seclab{compat}

The goal now is to find a low complexity subdivision of the plane,
such that within each cell of the subdivision the \flirting set is
fixed.  The main insight is that by using the unweighted Voronoi
diagram one can get such a subdivision.

Let $\VCellPrefix{i}$ denote the Voronoi cell of $\site_i$ in the
unweighted Voronoi diagram of the $i$\th prefix
$\SiteSet_i = \brc{\site_1,\ldots, \site_i}$. Let $\Arr$ denote
the arrangement formed by the overlay of the regions
$\VCellPrefix{1}, \ldots, \VCellPrefix{n}$.  The complexity of $\Arr$,
denoted by $\cardin{\Arr}$, is the total number of these faces, edges,
and vertices, as well as the number of $x$-extremal points on the
edges.  By our assumptions on the bisectors, the number of vertices
bounds the complexity $\cardin{\Arr}$.

\begin{lemma}
    \lemlab{constant}%
    For any face $\face$ of $\Arr = \Arr\pth{ \VCellPrefix{1}, \ldots,
       \VCellPrefix{n}}$, the \flirting set is the same, for all
    points in $\face$.
\end{lemma}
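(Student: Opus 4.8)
The plan is to identify membership in the candidate set with membership in the unweighted prefix Voronoi cells $\VCellPrefix{i}$, and then to exploit that each face of the overlay $\Arr$ lies on a single side of every $\VCellPrefix{i}$. First I would unwind the definitions: by definition $\site_i \in \candidY{\query}{\OSiteSet}$ iff $\dist{\query}{\site_i} = \distSet{\query}{\OSiteSet_i}$, and since $\SiteSet_i = \SetOf{\OSiteSet_i}$ this says exactly that $\dist{\query}{\site_i} \le \dist{\query}{\siteA}$ for every $\siteA \in \SiteSet_i$ --- which, reading off the definition of a Voronoi cell, is precisely $\query \in \VCellPrefix{i}$ (recall $\VCellPrefix{i} = \VorCell{\site_i}{\SiteSet_i}$). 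Hence $\candidY{\query}{\OSiteSet} = \{\, \site_i \mid \query \in \VCellPrefix{i}\,\}$, and it suffices to show that for each fixed $i$ the truth value of the predicate ``$\query \in \VCellPrefix{i}$'' is constant as $\query$ ranges over a single face $\face$ of $\Arr$.

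Next I would invoke the structure of the overlay. By the definition of the arrangement (overlay) of a family of regions --- the arrangement of the union of their boundary curves --- $\Arr = \Arr(\VCellPrefix{1}, \ldots, \VCellPrefix{n})$ is the arrangement of $\bigcup_i \partial \VCellPrefix{i}$, where each $\partial\VCellPrefix{i}$ is a finite union of well-behaved unweighted bisector arcs by \remref{assumptions}. So a face $\face$ is a connected component of $\Re^2 \setminus \bigcup_i \partial \VCellPrefix{i}$; in particular $\face$ is open, connected, and disjoint from $\partial\VCellPrefix{i}$ for every $i$. Since $\VCellPrefix{i}$ is closed, $\Re^2 \setminus \partial\VCellPrefix{i}$ is the disjoint union of the two open sets $\operatorname{int}(\VCellPrefix{i})$ and $\Re^2 \setminus \VCellPrefix{i}$, so the connected set $\face$ lies entirely in one of them. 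In the first case $\face \subseteq \operatorname{int}(\VCellPrefix{i}) \subseteq \VCellPrefix{i}$, so $\query \in \VCellPrefix{i}$ for all $\query \in \face$; in the second, $\face \cap \VCellPrefix{i} = \emptyset$, so $\query \notin \VCellPrefix{i}$ for all $\query \in \face$. Either way the predicate is constant on $\face$, and since this holds for every $i$, the set $\candidY{\query}{\OSiteSet}$ is the same for every $\query \in \face$, which is the claim.

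I do not expect a genuine obstacle --- the argument is essentially bookkeeping --- but the point that needs care is the interaction of ties with the non-strict inequality in the definition of the candidate set: a point lying on $\partial\VCellPrefix{i}$ still has $\site_i$ in its candidate set, so the statement would fail for arrangement edges and vertices. What rescues it is exactly that $\face$ is an open $2$-cell of $\Arr$, i.e. a connected component of the complement of \emph{all} the boundary curves of the $\VCellPrefix{i}$'s; this is already built into the definition of the overlay, so no extra genericity beyond \remref{assumptions} is needed.
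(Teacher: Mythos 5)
Your argument is correct and is essentially the same observation the paper makes, just unrolled more carefully: the paper phrases it incrementally (adding $\site_i$ changes the candidate set exactly on $\VCellPrefix{i}$, hence candidate sets are constant on faces of the overlay), whereas you state the static equivalent directly, that $\candidY{\query}{\OSiteSet} = \{\site_i : \query \in \VCellPrefix{i}\}$, and then make explicit the topological step the paper summarizes as ``the claim now easily follows'' --- namely that a face, being an open connected component of the complement of all the $\partial\VCellPrefix{i}$, must lie entirely inside or entirely outside each $\VCellPrefix{i}$. Your closing remark about edges and vertices is a fair point of care, but it does not reveal a gap in the paper, since the lemma is stated only for faces; both proofs are on equal footing there.
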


\begin{proof}
    Initially, all points in the plane have the same \flirting set,
    namely the empty set.  When the site $\site_i$ is added, the only
    points in the plane whose \flirting set changes are those such
    that $\site_i$ is their nearest neighbor in $\SiteSet_i$.
    However, these are precisely the points in the Voronoi cell of
    $\site_i$ in the unweighted Voronoi diagram of $\SiteSet_i$. That
    is, the \flirting set changes only for the points covered by
    $\VCellPrefix{i}$ -- where $\site_i$ is being added to the
    \flirting set.
    
    The claim now easily follows, as $\Arr$ is the overlay arrangement
    of these regions. %
    \DCGVer{{\qed}}
\end{proof}


\begin{theorem}
    \thmlab{sitesTheorem}%
    %
    Let $\SiteSet$ be a set of $n$ sites in the plane, satisfying the
    conditions in \remref{assumptions} and \remref{assumptions:2},
    where for each site a weight is sampled independently from some
    distribution $\pdf$.  Let $\OSiteSet = \permut{\site_1, \ldots,
       \site_n}$ be the ordering of the sites by increasing weights,
    and let $\VCellPrefix{i} = \VorCell{\site_i}{\OSiteSet_i}$, for
    $i=1,\ldots, n$. Let $\Arr = \Arr\pth{ \VCellPrefix{1}, \ldots,
       \VCellPrefix{n}}$ be the arrangement formed by the overlay of
    all these cells.
    
    Then, the expected complexity of the multiplicative Voronoi
    diagram $\WVorX{\SiteSet}$ is
    $O\pth{\Bigl. \Ex{ \bigl.  \cardin{\Arr} }\worstX{\log n} }$,
    where $\cardin{\Arr}$ is the total complexity of $\Arr$, and
    $\worstX{m}$ denotes the worst case complexity of a weighted
    Voronoi diagram of $m$ sites.
\end{theorem}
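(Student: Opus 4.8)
\medskip
\noindent\textbf{Proof plan.}
The plan is to \emph{localize} the (globally ill-behaved) weighted diagram to the faces of $\Arr$, where it degenerates into the weighted diagram of only $O(\log n)$ sites. Fix a constant $c$, let $\alpha=\alpha(c)$ be the constant from \corref{forAll}, and let $\EventA$ be the event that every point of the plane has \flirting set (for $\OSiteSet$) of size at most $\alpha\log n$; then $\Prob{\overline{\EventA}}\le n^{-c}$ by \corref{forAll}. I would first dispose of the complementary event: one always has $\cardin{\WVorX{\SiteSet}}\le\worstX{n}$, and $\worstX{n}$ is polynomially bounded, so choosing $c$ larger than the degree of that polynomial makes $\Ex{\cardin{\WVorX{\SiteSet}}\mid\overline{\EventA}}\Prob{\overline{\EventA}}\le\worstX{n}\cdot n^{-c}=o(1)$ negligible.

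The heart of the argument is to bound $\cardin{\WVorX{\SiteSet}}$ on $\EventA$. For a face $\face$ of $\Arr$, \lemref{constant} guarantees that the \flirting set $\candid_\face := \candidY{\query}{\OSiteSet}$ is the same for all $\query\in\face$, and on $\EventA$ we have $\cardin{\candid_\face}\le\alpha\log n$. For any $\query\in\face$, \lemref{influence} says the site inducing $\cellPL{\query}{\SiteSet}$ lies in $\candid_\face$; since $\candid_\face\subseteq\SiteSet$ forces $\min_{\site_j\in\candid_\face}\func_j\pth{\query}\ge\min_{\site_j\in\SiteSet}\func_j\pth{\query}$ while the right-hand minimum is attained inside $\candid_\face$, the two minima coincide and are attained at the same site. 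Hence $\WVorX{\SiteSet}$ and $\WVorX{\candid_\face}$ induce \emph{the same subdivision of $\face$}, so the number of vertices, edges, faces, and $x$-extremal points of $\WVorX{\SiteSet}$ in the interior of $\face$ is $O\pth{\worstX{\cardin{\candid_\face}}}=O\pth{\worstX{\alpha\log n}}$.

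It then remains to account for the parts of $\WVorX{\SiteSet}$ meeting the face boundaries of $\Arr$; equivalently, I would bound the complexity of the overlay of $\WVorX{\SiteSet}$ and $\Arr$. A vertex of that overlay is a vertex of $\Arr$ (there are $\cardin{\Arr}$ of these), a vertex of $\WVorX{\SiteSet}$ (handled above), or a crossing of an edge $\edge$ of $\Arr$ with a weighted bisector of $\WVorX{\SiteSet}$ separating two sites that are weighted-nearest on one side of $\edge$, hence lying in the \flirting set of the incident face $\face$. Since $\edge$ is a contiguous piece of an (unweighted) bisector and, by \remref{assumptions} and \remref{assumptions:2}, any two bisectors (weighted or not) cross $O(1)$ times, each edge of $\Arr$ carries $O\pth{\cardin{\candid_\face}^2}=O\pth{\log^2 n}$ such crossings, with $O(1)$ $x$-extremal points per resulting arc. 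Summing over the faces and edges of $\Arr$ (each edge bounding two faces), on $\EventA$
\begin{align*}
  \cardin{\WVorX{\SiteSet}}
  \;=\;
  O\pth{\cardin{\Arr}\cdot \worstX{\alpha\log n}+\cardin{\Arr}\operatorname{polylog} n}
  \;=\;
  O\pth{\cardin{\Arr}\cdot\worstX{\log n}},
\end{align*}
using that $\worstX$ is polynomially bounded (so $\worstX{\alpha\log n}=O\pth{\worstX{\log n}}$) and, for the diagrams at hand, at least quadratic (absorbing the $\operatorname{polylog}$ term). Finally, applying the law of total expectation and plugging in the negligible bad-event contribution of the first paragraph yields $\Ex{\cardin{\WVorX{\SiteSet}}}=O\pth{\worstX{\log n}}\cdot\Ex{\cardin{\Arr}}+o(1)=O\pth{\Ex{\cardin{\Arr}}\cdot\worstX{\log n}}$, as claimed.

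The main obstacle, and the only place that needs genuine care, is the bookkeeping of the third paragraph: one must verify that overlaying $\WVorX{\SiteSet}$ onto $\Arr$ introduces only $O(\operatorname{polylog} n)$ new features per face and edge of $\Arr$, which is precisely what the constant-intersection and bounded-$x$-extremal-point hypotheses of \remref{assumptions} and \remref{assumptions:2} are there to guarantee. The identity that $\WVorX{\SiteSet}$ and $\WVorX{\candid_\face}$ agree on $\face$, though short to prove, is the conceptual crux and should be stated explicitly before the counting.
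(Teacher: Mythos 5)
Your argument is correct and close in spirit, but it diverges from the paper at the localization step, and the divergence costs a small bit of generality. The paper first takes the \emph{vertical decomposition} of $\Arr$, producing $O(\cardin{\Arr})$ cells of constant description complexity; inside such a cell $\Delta$, the weighted diagram is a clipping of $\WVorX{\candid}$ (with $\candid$ the fixed candidate set on $\Delta$), and since $\partial\Delta$ has $O(1)$ pieces, each crossed $O(1)$ times by any weighted bisector, each of the at most $\worstX{\cardin{\candid}}$ edges of $\WVorX{\candid}$ contributes $O(1)$ new boundary features, so the clipped complexity is $O(\worstX{\cardin{\candid}})$ outright, with no separate boundary term. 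You instead keep the faces of $\Arr$ whole and tally crossings of the edges of $\Arr$ with weighted bisectors separately; your $O(\log^2 n)$-per-edge bound is right, but folding the resulting $\cardin{\Arr}\operatorname{polylog} n$ term into $\cardin{\Arr}\cdot\worstX{\log n}$ needs $\worstX{m}=\Omega(m^2)$ --- true in all the paper's applications, as you note, but not a hypothesis of the theorem. The vertical decomposition buys that bound unconditionally and is the one place where the paper's route is genuinely cleaner. Conversely, your explicit split into the event $\EventA$ and its $n^{-c}$-probability complement, with the crude $\worstX{n}$ bound absorbing the latter and $\Ex{\cardin{\Arr}\mathbf{1}_{\EventA}}\le\Ex{\cardin{\Arr}}$, is a more careful write-up of the probabilistic bookkeeping than the paper's terse combination of a high-probability statement with an expectation, and your observation that $\WVorX{\SiteSet}$ and $\WVorX{\candid_\face}$ induce the same subdivision of $\face$ is precisely the paper's use of \lemref{influence}.
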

\begin{proof}
    We first compute a vertical decomposition of the faces of $\Arr$, 
    in order to break up each face into constant complexity cells.
    Specifically, each face has two types of vertices -- extremal
    points on the bisectors and intersections of bisectors. From each
    such vertex shoot out vertical rays.  Doing so partitions the
    plane into constant complexity cells (or, somewhat imprecisely,
    vertical trapezoids) and the total number of such cells is
    proportional to $\cardin{\Arr}$ (i.e. the number of extremal
    points and intersections).
    
    \lemref{constant} implies that within each cell of the vertical
    decomposition the \flirting set is fixed.  So consider such a cell
    $\Delta$, and let $\candidSet$ be its \flirting set.
    \lemref{influence} implies that the only sites whose weighted
    Voronoi cells can have non-zero area in $\Delta$ are the sites in
    $\candidSet$.  That is, the Voronoi diagram in $\Delta$ is the
    intersection of $\Delta$ with the weighted Voronoi diagram of some
    subset of $\candidSet$.  Now the weighted Voronoi diagram of
    $\leq \cardin{\candidSet}$ points has worst case complexity
    $\worstX{\cardin{\candidSet}}$.  Since $\Delta$ is a constant
    complexity region this implies that the complexity of the weighted
    Voronoi diagram in $\Delta$ is
    $O\pth{ \worstX{ \cardin{ \candidSet } } }$.
    
    By \corref{forAll}, for all points in the plane, the \flirting set
    is of size $O( \log n)$ (with high probability), and since there
    are $O( \cardin{\Arr})$ cells (in expectation), the claim now
    readily follows. %
    \DCGVer{{\qed}}
\end{proof}

\newcommand{\VVX}[1]{%
   \begin{minipage}{0.297\linewidth}%
       \begin{minipage}{0.98\linewidth}%
           \includegraphics[page=#1,width=0.99\linewidth]{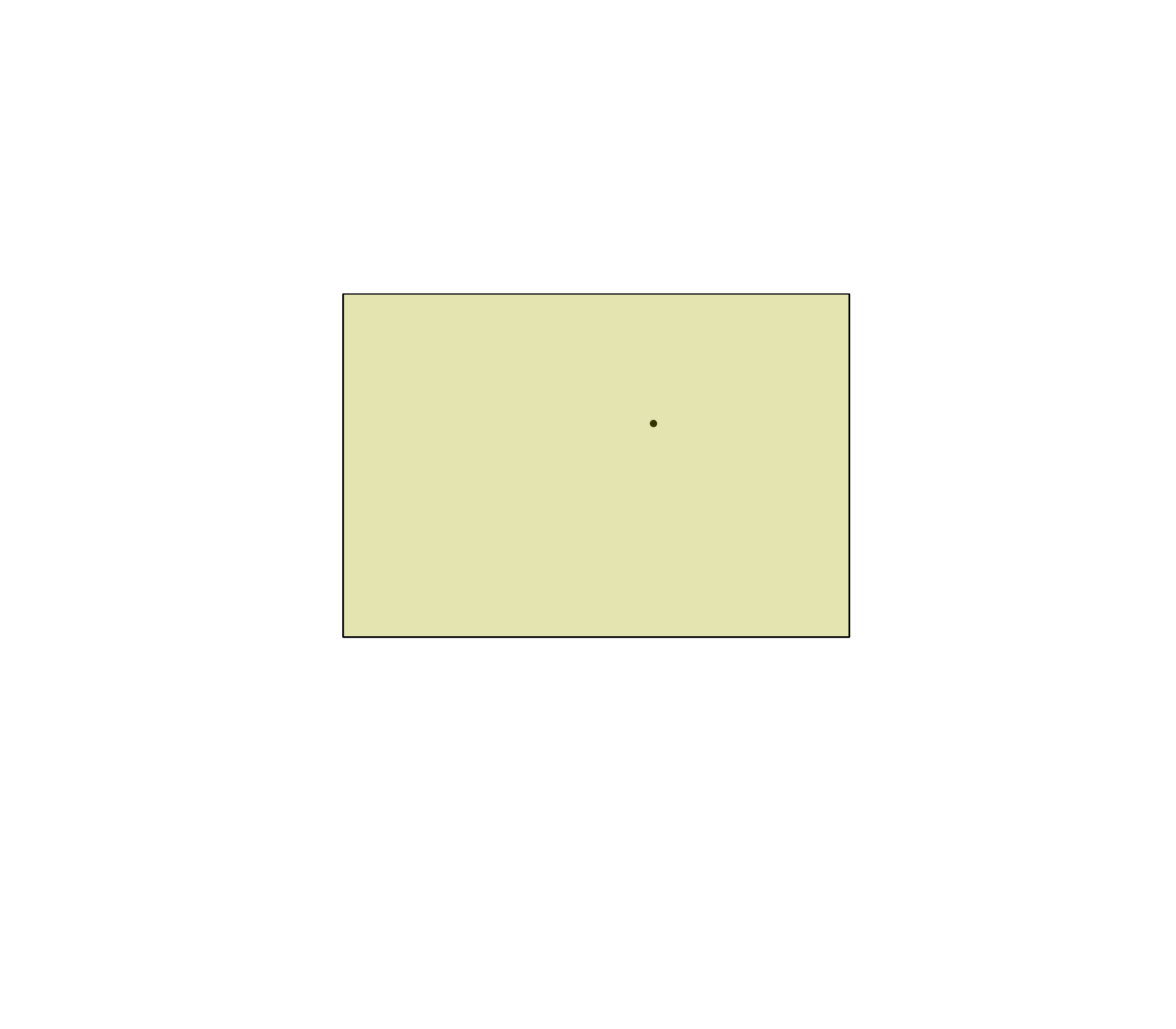}%
           \hspace*{-1cm}
      \end{minipage}
   \end{minipage}
}%
\newcommand{\VVZ}[1]{%
   \includegraphics[page=#1,width=0.5\linewidth]{figs/overlay_faces}%
}

\begin{figure}[\DCGVer{t}\NotDCGVer{p}]
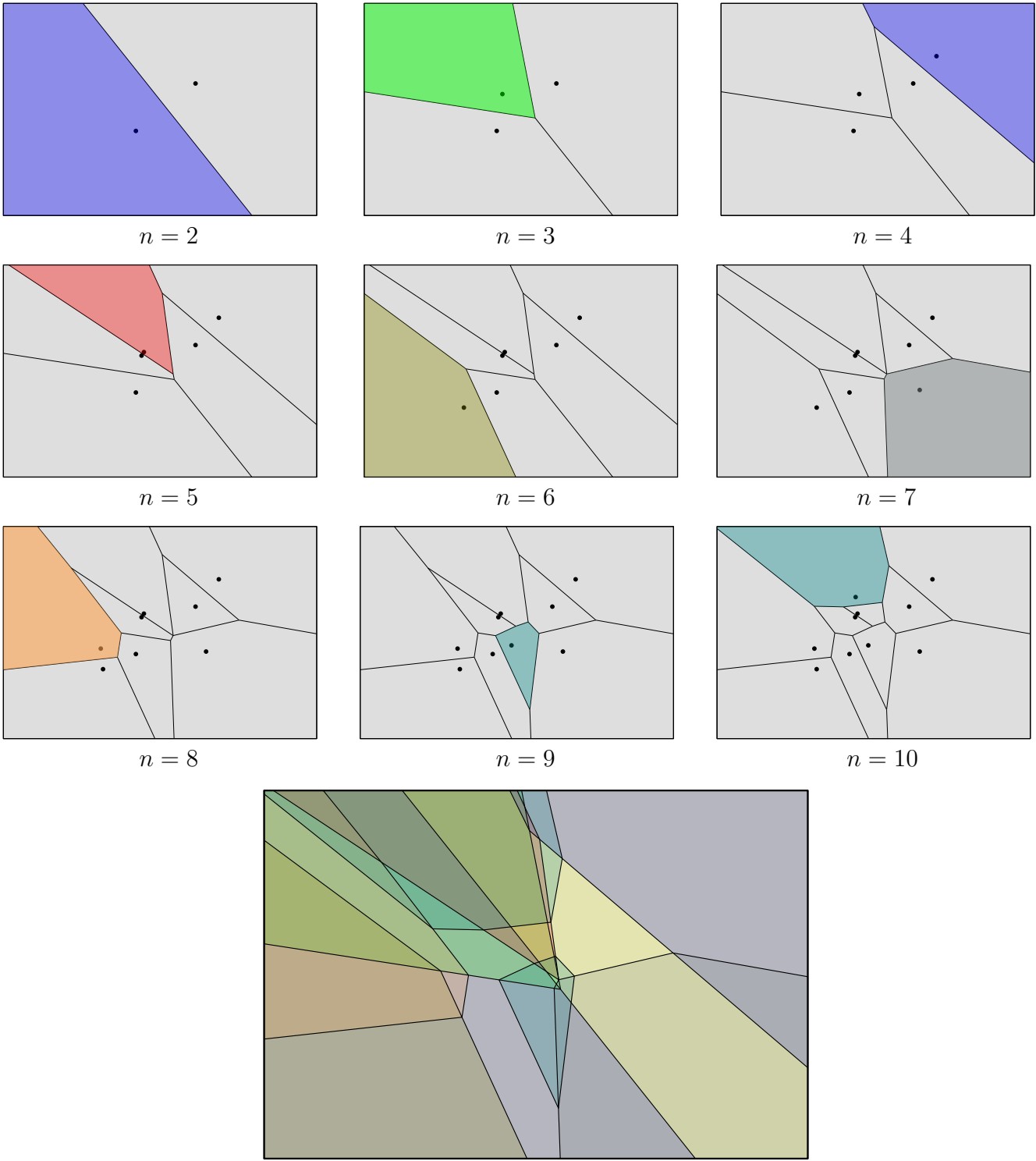

    \begin{tabular}{c%
      cc}
      \VVX{2} & \VVX{3}& \VVX{4}%
                         \smallskip
      \\
      $n=2$ & $n=3$ & $n=4$
                      \medskip%
      \\
      \VVX{5} & \VVX{6}& \VVX{7} 
                         \smallskip
      \\
      $n=5$ & $n=6$ & $n=7$%
                      \medskip%
      \\
      \VVX{8} & \VVX{9} & \VVX{10}
                          \smallskip%
      \\
      $n=8$ & $n=9$ & $n=10$ 
                      \medskip
    \end{tabular}
    \centerline{\VVZ{12}}
    \caption{The randomized incremental construction of a Voronoi
       diagram of point sites, and the resulting overlay arrangement.}
    \figlab{overlay:Voronoi}
\end{figure}

For the concrete case when the sites are points in the plane, the
worst case complexity of the weighted Voronoi diagram is quadratic
\cite{ae-oacwv-84}, and so in the above theorem $\worstX{m} = O(m^2)$.
Kaplan \etal \cite{krs-omdri-11} showed that for a random permutation
of $n$ points (as is the case here) the expected total complexity of
$\Arr$ is $O(n\log n)$, see \figref{overlay:Voronoi} for an example of
such an overlay arrangement.  We therefore readily have the following
result.

\begin{theorem}
    \thmlab{points}%
    Let $\PntSet$ be a set of $n$ points in the plane, where for each
    point we independently sample a weight from some distribution
    $\pdf$.  Then the expected complexity of the multiplicative
    Voronoi diagram of $\PntSet$ is $O\pth{ n\log^3 n }$.
\end{theorem}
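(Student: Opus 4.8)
The plan is to derive \thmref{points} as an essentially immediate corollary of \thmref{sitesTheorem}, once we verify that point sites satisfy the required hypotheses and plug in the two relevant quantitative facts about points.

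First I would check that a set $\PntSet$ of points satisfies the conditions of \remref{assumptions} and \remref{assumptions:2}. For point sites the (unweighted) bisector of two points is a line, which is a simple curve whose removal splits the plane into two unbounded regions, which has constant complexity, and any two such lines meet at most once; the Voronoi cell of a point in any subdiagram is convex, hence simply connected, and the cells cover the plane. For the weighted case, a multiplicatively weighted bisector of two points is either a line or an Apollonius circle, which again has constant complexity, and any two such curves (line/line, line/circle, or circle/circle) intersect in at most two points. Thus all the hypotheses hold and \thmref{sitesTheorem} applies, giving that the expected complexity of $\WVorX{\PntSet}$ is $O\pth{\Ex{\cardin{\Arr}}\,\worstX{\log n}}$, where $\Arr = \Arr\pth{\VCellPrefix{1},\ldots,\VCellPrefix{n}}$ is the overlay of the prefix Voronoi cells $\VCellPrefix{i} = \VorCell{\site_i}{\OSiteSet_i}$, and $\worstX{m}$ is the worst-case complexity of a multiplicative weighted Voronoi diagram of $m$ point sites.

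Next I would substitute the two known bounds. For point sites the worst-case complexity of the multiplicative weighted Voronoi diagram is quadratic \cite{ae-oacwv-84}, so $\worstX{m} = O(m^2)$ and hence $\worstX{\log n} = O(\log^2 n)$. For the overlay, observe that since the weights are sampled independently from a common distribution $\pdf$, the induced increasing-weight ordering $\OSiteSet$ is a uniform random permutation of $\PntSet$, so $\VCellPrefix{i} = \VorCell{\site_i}{\OSiteSet_i}$ is exactly the Voronoi cell of the $i$th inserted site in a randomized incremental construction of the Voronoi diagram of $\PntSet$. Kaplan \etal \cite{krs-omdri-11} showed that the expected total complexity of the overlay of these $n$ cells is $O(n\log n)$, i.e.\ $\Ex{\cardin{\Arr}} = O(n\log n)$. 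Combining, the expected complexity of the multiplicative Voronoi diagram of $\PntSet$ is $O\pth{n\log n \cdot \log^2 n} = O(n\log^3 n)$, as claimed.

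I do not expect a genuine obstacle here, since the conceptual work already lives in \thmref{sitesTheorem} and in the two cited results; the only mild point to keep track of is the low-probability event, already folded into the statement of \thmref{sitesTheorem} via \corref{forAll}, that some \flirting set exceeds its $O(\log n)$ bound. On that event one falls back on the crude polynomial worst-case estimate $\worstX{n} = O(n^4)$, which contributes only $O(n^4)\cdot n^{-c} = o(1)$ to the expectation once $c$ is chosen large enough, and so does not affect the final bound.
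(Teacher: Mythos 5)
Your proposal is correct and follows essentially the same route as the paper: it invokes \thmref{sitesTheorem}, checks that point sites satisfy \remref{assumptions} and \remref{assumptions:2}, and then plugs in $\worstX{m}=O(m^2)$ from \cite{ae-oacwv-84} together with the $O(n\log n)$ overlay bound of Kaplan \etal~\cite{krs-omdri-11}. The extra remark about absorbing the low-probability event via the crude $\worstX{n}=O(n^4)$ bound is a sound accounting of a detail the paper folds into \corref{forAll}.
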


\begin{corollary}
    Let $\PntSet$ be a set of $n$ points in the plane, where for each
    point we independently sample a weight from some distribution
    $\pdf$.  Then the multiplicative Voronoi diagram of $\PntSet$ can
    be computed in $O\pth{ n\log^3 n }$ expected time.
\end{corollary}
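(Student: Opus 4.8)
The plan is to turn the proof of \thmref{sitesTheorem} (in the point case of \thmref{points}) into an algorithm whose expected running time matches the $O(n\log^3 n)$ bound on the output size. First, sort the points by their sampled weights in $O(n\log n)$ time; since the weights induce a uniform random permutation $\OSiteSet=\permut{\site_1,\ldots,\site_n}$, the insertion order used below is random. Second, build the overlay arrangement $\Arr=\Arr\pth{\VCellPrefix{1},\ldots,\VCellPrefix{n}}$ by randomized incremental construction: maintaining the unweighted Voronoi diagram of the current prefix alongside the overlay, inserting $\site_i$ produces its new cell $\VCellPrefix{i}$ (whose complexities satisfy $\Ex{\sum_i\cardin{\VCellPrefix{i}}}=O(n)$ by the standard backwards analysis), which we then splice into $\Arr_{i-1}=\Arr\pth{\VCellPrefix{1},\ldots,\VCellPrefix{i-1}}$. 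As in the analysis of Kaplan \etal~\cite{krs-omdri-11}, the expected total number of features ever created is $O(n\log n)$, and with standard RIC machinery (history-based point location, conflict lists) the whole construction runs in $O(n\operatorname{polylog} n)$ expected time. During this process we maintain, for each face of $\Arr$, the list of indices $i$ with the face contained in $\VCellPrefix{i}$; by \lemref{constant} this is precisely the \flirting set of the face, and since each newly created face copies its parent's list and appends at most one index, the total cost of this bookkeeping is $O\pth{\Ex{\cardin{\Arr}}\log n}=O(n\log^2 n)$ by \corref{forAll}.

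Third, compute the vertical decomposition of $\Arr$ in $O(\cardin{\Arr})$ time, obtaining $O(\cardin{\Arr})$ constant-complexity cells that tile the plane, each inheriting the \flirting set $\candidSet$ of the face of $\Arr$ containing it. For each such cell $\Delta$, compute the multiplicative weighted Voronoi diagram of the points of $\candidSet$ using the optimal $O(\cardin{\candidSet}^2)$-time planar algorithm of Aurenhammer and Edelsbrunner~\cite{ae-oacwv-84}, and clip it to $\Delta$ in time linear in its complexity, i.e.\ $O(\cardin{\candidSet}^2)$; by \lemref{influence} the result is exactly $\WVorX{\PntSet}\cap\Delta$. Gluing these pieces over all $\Delta$ yields $\WVorX{\PntSet}$ refined by the vertical decomposition of $\Arr$ (the auxiliary edges are then removable in linear time), and the time for this phase is $O\pth{\sum_\Delta\cardin{\candidSet_\Delta}^2}$.

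It remains to bound the last sum in expectation. Let $\Event$ be the good event of \corref{forAll} that every \flirting set has size at most $\alpha\log n$; then $\Prob{\lnot\Event}\leq n^{-c}$ for a constant $c$ we may take as large as we wish. Conditioned on $\Event$, $\sum_\Delta\cardin{\candidSet_\Delta}^2\leq\pth{\alpha\log n}^2\cdot O\pth{\cardin{\Arr}}$, so its expectation is $O\pth{\log^2 n\cdot\Ex{\cardin{\Arr}}}=O(n\log^3 n)$; on $\lnot\Event$ we fall back to a brute-force polynomial-time construction of $\WVorX{\PntSet}$, which contributes $o(1)$ to the expectation for $c$ large enough. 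Combining with the $O(n\log n)$ sort, the $O(n\operatorname{polylog} n)$ overlay construction, and the $O(n\log^2 n)$ \flirting-set bookkeeping, the total expected running time is $O(n\log^3 n)$. The only genuinely delicate point is showing that the randomized incremental overlay construction runs in near-linear expected time — that the per-step point-location and splicing costs sum to $O(n\operatorname{polylog} n)$ — which rests on the machinery of Kaplan \etal~\cite{krs-omdri-11}; everything else is bookkeeping layered on top of the already-established combinatorial bounds.
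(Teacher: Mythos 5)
Your proof is correct and follows the same high-level approach as the paper: sort by weight, compute the overlay $\Arr$ of the cells $\VCellPrefix{i}$, decompose faces into constant-complexity pieces carrying their candidate sets, and run a quadratic-time weighted Voronoi algorithm in each piece. The one place you genuinely diverge is in how $\Arr$ is built. The paper first computes all the polygons $\VCellPrefix{1},\ldots,\VCellPrefix{n}$ via the usual RIC of the \emph{unweighted} Voronoi diagram in $O(n\log n)$ time, then feeds the resulting $O(n\log n)$ expected boundary segments into an off-the-shelf segment arrangement algorithm (citing Mulmuley~\cite{m-cgitr-94}) to get $\Arr$ in $O(n\log^2 n)$ expected time; this is clean because the arrangement algorithm and its bound are standard. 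You instead propose to splice $\VCellPrefix{i}$ directly into $\Arr_{i-1}$ during the incremental construction, invoking ``standard RIC machinery'' from Kaplan~\etal~\cite{krs-omdri-11} and asserting only an $O(n\,\mathrm{polylog}\,n)$ bound. This is plausible, but Kaplan~\etal's result is a combinatorial bound on the overlay, not an algorithm for building it; making the incremental splicing rigorous (history-based location into a mutating planar subdivision, conflict lists for not-yet-inserted sites, etc.) would require an extra argument, and your hedged polylog exponent leaves the claimed $O(n\log^3 n)$ total not fully pinned down. The paper's two-phase route sidesteps this entirely, so I would switch to it. Your other deviations are harmless or improvements: vertical decomposition vs.\ triangulation, Aurenhammer--Edelsbrunner~\cite{ae-oacwv-84} vs.\ Aurenhammer's power-diagram algorithm (the paper's citation of~\cite{a-pdpaa-87} here is itself slightly off --- the multiplicative diagram algorithm is~\cite{ae-oacwv-84}), and your explicit conditioning on the high-probability event of \corref{forAll} with a polynomial-time fallback is actually more careful than the paper's treatment, which leaves that step implicit.
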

\begin{proof}
    This follows readily from the above constructive proof, and so we
    only sketch the algorithm. First, compute the ordering
    $\OSiteSet =\permut{\site_1, \ldots,\site_n}$ of the sites by
    increasing weight, and the set of polygons
    $\brc{\VCellPrefix{1}, \ldots, \VCellPrefix{n}}$, where
    $\VCellPrefix{i} = \VorCell{\site_i}{\OSiteSet_i}$, by computing
    the unweighted Voronoi diagram by incremental construction
    (i.e. each $\VCellPrefix{i}$ is computed explicitly during the
    insertion of $\site_i$).  Next, compute
    $\Arr = \Arr\pth{ \VCellPrefix{1}, \ldots, \VCellPrefix{n}}$.
    Triangulate the faces of $\Arr$, and within each triangle compute
    the multiplicative Voronoi diagram of its candidate list, and clip
    it to the triangle (note the candidate lists are given by $\Arr$).

    For the running time, computing the unweighted Voronoi diagram by
    randomized incremental construction takes $O(n\log n)$ expected
    time.  By Kaplan \etal~\cite{krs-omdri-11}, the total number of
    segments over all the polygons and the arrangement $\Arr$ have
    expected complexity $O(n\log n)$.  Therefore computing $\Arr$ from
    $\brc{\VCellPrefix{1}, \ldots, \VCellPrefix{n}}$ takes
    $O(n\log^2 n)$ expected time \cite{m-cgitr-94}.  Triangulating the
    faces takes linear time in $\cardin{\Arr}$.  Using the quadratic
    time algorithm of Aurenhammer \cite{a-pdpaa-87}, computing the
    multiplicative diagram in each face take $O(\log^2 n)$ time, as by
    \corref{forAll}, all candidate lists have size $O(\log n)$. %
    \DCGVer{{\qed}}
\end{proof}

For more general sites, the real difficulty is in bounding
$\Ex{\cardin{\Arr}}$.  Specifically, in the next section we extend the
result of Kaplan \etal~\cite{krs-omdri-11} to these more general
settings.

\section{The complexity of the overlay of Voronoi %
   cells in \RIC}
\seclab{proof}

We next study the expected complexity of the overlay of Voronoi cells
and envelopes in a randomized incremental construction.  Specifically,
we first prove a result on the lower envelope of functions in two
dimensions, and then use it to prove a bound on the complexity of the
overlay of Voronoi cells of sites in the plane.

\subsection{Preliminaries}
\seclab{clarkson}

In the following, we need to use the Clarkson-Shor technique
\cite{cs-arscg-89}, which we quickly review here (see \cite{h-gaa-11}
for details).  Specifically, let $\ElemSet$ be a set of elements such
that any subset $\RSample \subseteq \ElemSet$ defines a corresponding
set of objects $\ObjX{\RSample}$ (e.g., $\ElemSet$ is a set of points
or sites in the plane, and any subset $R \subseteq \ElemSet$ induces
the set of edges of the Voronoi diagram $\VorX{\RSample}$).  Each
potential object, $\object$, has a defining set and a stopping set.
The \emphi{defining set}, $\DefSet{\object}$, is a subset of
$\ElemSet$ that must appear in $\RSample$ in order for the object to
be present in $\ObjX{\RSample}$, where this set has size bounded by
the same constant for all objects.  The \emphi{stopping set},
$\KillSet{\object}$, is a subset of $\ElemSet$ such that if any of its
members appear in $\RSample$ then $\object$ is not present in
$\ObjX{\RSample}$ (we also naturally require that $\KillSet{\object}
\cap \DefSet{\object} = \emptyset$, for all $\object$).  Surprisingly,
this already implies the following.

\begin{theorem}[Bounded Moments, \cite{cs-arscg-89}]%
    \thmlab{moments}%
    Using the above notation, let $\ElemSet$ be a set of $n$ elements,
    and let $\RSample$ be a random sample of size $r$ from
    $\ElemSet$. Let $f(\cdot)$ be a polynomially growing
    function\footnote{%
       A function $f(n)$ is a \emphi{polynomially growing}, if (i)
       $f(\cdot)$ is monotonically increasing, (ii) for any integers
       $i,n \geq 1$, $f(in) = i^{O(1)}f(n)$. This holds for example if
       $f(n)$ is a constant degree polynomial of $n$, with all its
       coefficients being positive. Of course, it holds for a much
       larger family of functions, e.g. $f(i) = i \log i$.}.  We have
    that 
    \begin{math}
        \ds%
        \Ex{\sum\nolimits_{\object\in \ObjX{\RSample}} f\pth{\bigl.
              \cardin{\KillSet{\object}} }}%
        =%
        O\pth{ \Bigl. \Ex{ \bigl. \cardin{\ObjX{\RSample}}} f(n/r) },
    \end{math}
    where the expectation is over the sample $\RSample$.
\end{theorem}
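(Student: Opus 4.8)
The plan is to combine linearity of expectation with a comparison between the given random sample $\RSample$ of size $r$ and an auxiliary uniform random sample $\RSample'$ of size $\rho = \ceil{r/2}$. Let $d$ denote the constant bounding $\cardin{\DefSet{\object}}$ over all objects, and for a potential object $\object$ write $i = \cardin{\DefSet{\object}} \le d$ and $k = \cardin{\KillSet{\object}}$; one may assume $r$ exceeds a suitable constant multiple of $d$, since otherwise every quantity involved is $O(1)$ and the bound is immediate. First I would record that $\object \in \ObjX{\RSample}$ exactly when $\DefSet{\object}\subseteq\RSample$ and $\KillSet{\object}\cap\RSample = \emptyset$, so $\Prob{\object\in\ObjX{\RSample}} = \binom{n-i-k}{r-i}\big/\binom{n}{r}$ (with the analogous formula, using $\rho$, for $\RSample'$), and then, by linearity over all potential objects,
\[ \Ex{\sum_{\object \in \ObjX{\RSample}} f\pth{\cardin{\KillSet{\object}}}} \;=\; \sum_{\object} f\pth{\cardin{\KillSet{\object}}}\,\Prob{\object \in \ObjX{\RSample}}, \qquad \Ex{\cardin{\ObjX{\RSample}}} \;=\; \sum_{\object}\Prob{\object\in\ObjX{\RSample}}. \]

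The argument then rests on two estimates. The first is a sampling comparison: there are constants $c_1, c_2 > 0$, depending only on $d$, such that $\Prob{\object\in\ObjX{\RSample'}} \ge c_1\, e^{\,c_2 k r/n}\,\Prob{\object\in\ObjX{\RSample}}$ for every $\object$. I would prove this by forming the ratio of the two binomial expressions; it factors into a ``defining-set part'', which is $\Omega(1)$ because $i\le d$, times a ``stopping-set part'', which is a product of $\Theta(r)$ terms each at least $1 + k/n$ and hence at least $e^{\,\Omega(kr/n)}$ --- intuitively, halving the sample makes it far likelier to avoid all $k$ stopping elements. The second estimate is that, since $f$ is polynomially growing, $f(k) \le C\, f(n/r)\, e^{\,c_2 k r/n}$ for a suitable constant $C$: for $k \le n/r$ this is just monotonicity of $f$, while for $k > n/r$ one has $f(k) \le f\pth{\ceil{kr/n}\cdot(n/r)} \le \ceil{kr/n}^{O(1)} f(n/r)$, and a polynomial in $kr/n$ is dominated, up to a constant, by $e^{\,c_2 kr/n}$. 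Multiplying the two estimates and summing over $\object$,
\[ \sum_{\object} f\pth{\cardin{\KillSet{\object}}}\,\Prob{\object\in\ObjX{\RSample}} \;\le\; C\, f(n/r)\sum_{\object} e^{\,c_2 kr/n}\,\Prob{\object\in\ObjX{\RSample}} \;\le\; \frac{C}{c_1}\, f(n/r)\,\Ex{\cardin{\ObjX{\RSample'}}}, \]
and the proof concludes by observing $\Ex{\cardin{\ObjX{\RSample'}}} = O\pth{\Ex{\cardin{\ObjX{\RSample}}}}$, which holds since $\rho \le r$ and $\Ex{\cardin{\ObjX{\RSample}}}$ is a (polynomially growing, hence monotone) function of the sample size in the settings to which we apply the bound.

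I expect the sampling comparison to be the main obstacle. One must move carefully between the sampling-without-replacement (hypergeometric) probabilities above and the cleaner Bernoulli-type bounds used in the heuristic picture, and one must keep the constants $c_1, c_2$ honestly independent of $n$ and $r$ across the whole parameter range (in particular when $r$ is close to $n$, so that $n-k-r$ may be small). Once the comparison inequality is in hand, the rest is routine manipulation of the resulting sums using linearity of expectation and the growth property of $f$.
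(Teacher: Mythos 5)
The paper states this result as \thmref{moments} and cites it to Clarkson and Shor~\cite{cs-arscg-89} without giving a proof, so there is no argument in the paper for you to be compared against. Your approach --- compare the given sample $\RSample$ of size $r$ with a fresh sample $\RSample'$ of half the size, compute the ratio of the hypergeometric probabilities as a product of falling-factorial ratios, split that ratio into a constant ``defining-set'' factor and an exponentially large ``stopping-set'' factor, and then absorb the polynomially-growing $f(k)$ into that exponential --- is exactly the standard Clarkson--Shor moments argument, and the two estimates you isolate are correct. In particular, writing $p = [r]_i [n-r]_k / [n]_{i+k}$ and $p' = [\rho]_i [n-\rho]_k / [n]_{i+k}$ with $\rho = \ceil{r/2}$, each factor in $[\rho]_i/[r]_i$ is at least a constant once $r$ exceeds a constant multiple of $d$, and each factor in $[n-\rho]_k/[n-r]_k$ equals $1 + (r-\rho)/(n-r-j) \ge 1 + r/(4n)$, so the product is at least $e^{kr/(8n)}$; this is uniform over the whole parameter range (it only improves when $n-r-k$ is small), which resolves the worry you flagged at the end.

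The one place your write-up actually needs repair is the very last step, $\Ex{\cardin{\ObjX{\RSample'}}} = O\pth{\Ex{\cardin{\ObjX{\RSample}}}}$, which you justify by appealing to ``the settings to which we apply the bound.'' That is not a consequence of the hypotheses in the statement, and in fact the statement as written is false without some such extra hypothesis: take $n = 2r$, let every potential object have $\cardin{\DefSet{\object}} = 1$ and $\cardin{\KillSet{\object}} = r$, and take $f(x) = x$; then the left-hand side is $f(r)\cdot \Ex{\cardin{\ObjX{\RSample}}}$ while the right-hand side is $O\pth{f(2)\cdot\Ex{\cardin{\ObjX{\RSample}}}}$, and the ratio $f(r)/f(2)$ is unbounded. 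The standard fix, which is how the cited sources actually phrase the result, is either to leave $\Ex{\bigl.\cardin{\ObjX{\RSample_{\ceil{r/2}}}}}$ on the right-hand side, or to add the explicit hypothesis that $\Ex{\bigl.\cardin{\ObjX{\RSample_\rho}}}$ is bounded by a polynomially growing function of $\rho$ (so that halving $\rho$ only loses a constant). The paper's two applications satisfy that hypothesis trivially --- in \lemref{overlay:2} the objects are Voronoi edges, so $\Ex{\cardin{\ObjX{\RSample_\rho}}} = \Theta(\rho)$, and in \thmref{tighter} the objects are trapezoids whose expected count is $O(\dsFunc{\intersections}{\rho}\log\rho)$, also monotone --- so nothing breaks downstream, but the hypothesis belongs in the theorem, not in a remark about the applications. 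Two very minor further points: you state that $\object \in \ObjX{\RSample}$ holds \emph{exactly when} $\DefSet{\object}\subseteq\RSample$ and $\KillSet{\object}\cap\RSample = \emptyset$, whereas the paper's framework only gives the ``only if'' direction (this is harmless for your upper bound on $\Prob{\object\in\ObjX{\RSample}}$, but your identity for $\Ex{\cardin{\ObjX{\RSample'}}}$ then becomes a one-sided inequality in the wrong direction unless you assume the biconditional, as essentially all treatments do); and the polynomial-growth hypothesis on $f$ is stated for integer arguments, so one should round $n/r$ up to an integer before invoking it, which costs only a constant.
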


\subsection{Complexity of the overlay of %
   lower-envelopes of functions in \RIC}
\seclab{functions}

Let $\FuncSet$ be a set of $n$ functions, such that for all $f\in
\FuncSet$, we have %
\begin{inparaenum}[(1)]
    \item $f:\Re \rightarrow \Re$, and
    \item $f$ is continuous.
\end{inparaenum}
The \emphi{curve} associated with $f$ is its image $\brc{ (x, f(x))
   \sep{ x \in \Re}}$. We use $f$ to refer both to the function and
its curve.

We assume that any pair of curves in $\FuncSet$ only intersect
transversally and at most $\intersections$ times, and that no three
curves intersect at a common point (i.e. general position), where
$\intersections$ is some small constant.  Here
$\OFuncSet = \permut{f_1, \ldots, f_n}$ denotes a fixed permutation of
the $n$ functions, $\OFuncSet_i = \permut{f_1, \ldots, f_i}$ denotes a
prefix of this permutation, and $\FuncSet_i = \brc{f_1, \ldots, f_i }$
is the associated unordered set.

Let $m_i$ be the number of vertices (i.e. intersections of functions)
on the lower envelope of $\FuncSet_i$ that are not present in the
lower envelope of $\FuncSet_{i-1}$.  For a given permutation
$\OFuncSet$ of $\FuncSet$, we define the \emphi{overlay complexity} to
be the quantity $\overlayX{\OFuncSet} = \sum_{i=1}^n m_i$.  In other
words, when we insert the $i$\th function we create a number of new vertices on
the lower envelope of $\OFuncSet_i$.  If we shoot down a vertical ray from 
each such vertex when it is created, then
$\overlayX{\OFuncSet}$ is the number of distinct locations on the
$x$-axis that get hit by rays over the entire randomized incremental
construction of the lower-envelope.

Let $\dsFunc{\intersections}{y}$ denote the maximum length of a
Davenport-Schinzel sequence of order $\intersections$ on $y$ symbols.
The function $\dsFunc{\intersections}{y}$ is monotonically increasing,
and slightly super linear for $\intersections \geq 3$, for example
$\dsFunc{\intersections}{y} = O\pth{ y \cdot 2^{O\pth{
         (\alpha(y))^\intersections}}}$,
where $\alpha$ is the inverse Ackermann function (for the currently
best bounds known, see \cite{p-sbdss-13}). The conditions on the
functions in $\FuncSet$ give us the following.
\begin{observation}
    \obslab{ds}%
    For $i=1,\ldots, n$, the number of vertices on the lower envelope
    of $\OFuncSet_i$ is $O \pth{\dsFunc{\intersections}{i}}$, where
    $\intersections$ is a constant (which is determined by the number
    of times pairs of curves are allowed to intersect), see
    \cite{sa-dsstg-95}.
\end{observation}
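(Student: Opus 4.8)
The plan is to reduce the claim to the classical connection between lower envelopes of well-behaved curves and Davenport--Schinzel sequences. Fix $i$ and consider the lower envelope of $\FuncSet_i = \brc{f_1, \ldots, f_i}$, that is, the graph of $x \mapsto \min_{1 \le j \le i} f_j(x)$ (note this depends only on the set $\FuncSet_i$, not on the order within $\OFuncSet_i$). Since each $f_j$ is continuous and defined on all of $\Re$, this envelope decomposes into maximal $x$-intervals (arcs), on each of which a single function attains the minimum, and the vertices of the envelope are exactly the shared endpoints of consecutive arcs. Reading off, from left to right, the index of the function realizing each arc yields a sequence $\sigma = \permut{a_1, \ldots, a_t}$ over the alphabet $\brc{1, \ldots, i}$, and the number of vertices on the envelope is $t-1$. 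So it suffices to show $t = O\pth{\dsFunc{\intersections}{i}}$.

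First I would observe that $\sigma$ has no two equal consecutive entries: two adjacent arcs realized by the same function would not be maximal, contradicting the construction. The crux is then to show that $\sigma$ contains no alternating subsequence of length $\intersections + 2$, i.e. no positions $p_1 < p_2 < \cdots < p_{\intersections+2}$ with $a_{p_1} = a_{p_3} = \cdots = j$ and $a_{p_2} = a_{p_4} = \cdots = k$ for distinct symbols $j, k$. Indeed, suppose such an alternation existed, and pick a point $x_\ell$ in the arc at position $p_\ell$, so $x_1 < \cdots < x_{\intersections+2}$. On the arc at position $p_\ell$ the realizing function lies strictly below the other, so $f_j(x_\ell) - f_k(x_\ell)$ is negative for $\ell$ odd and positive for $\ell$ even (or vice versa); hence the continuous function $f_j - f_k$ changes sign at least $\intersections + 1$ times on $(x_1, x_{\intersections+2})$, and by the intermediate value theorem has at least $\intersections + 1$ zeros. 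This means $f_j$ and $f_k$ intersect at least $\intersections+1$ times, contradicting the hypothesis that any pair of curves in $\FuncSet$ meets at most $\intersections$ times. Therefore $\sigma$ is a Davenport--Schinzel sequence of order $\intersections$ on $i$ symbols, so $t \le \dsFunc{\intersections}{i}$, and the envelope of $\OFuncSet_i$ has at most $\dsFunc{\intersections}{i} - 1 = O\pth{\dsFunc{\intersections}{i}}$ vertices.

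The step I expect to require the most care is the no-alternation argument, specifically making the "sign change forces a crossing" reasoning rigorous; but the general-position and transversality assumptions on $\FuncSet$ are exactly what make the sign changes and the counting of intersection points unambiguous (tangential contacts are excluded). A minor secondary point is the boundary behavior: the leftmost and rightmost arcs are unbounded, but they still contain interior points, so selecting the sample points $x_\ell$ is legitimate even there. Everything else is entirely standard (see \cite{sa-dsstg-95}).
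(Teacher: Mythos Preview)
Your argument is correct and is precisely the standard reduction of lower envelopes to Davenport--Schinzel sequences. The paper does not supply its own proof of this observation at all; it simply states it and cites \cite{sa-dsstg-95}, where the argument you wrote out is the canonical one.
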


\begin{lemma}
    \lemlab{nice:functions:l:e}%
    %
    Let $\OFuncSet = \permut{f_1, \ldots, f_n}$ be a random
    permutation of a set of continuous functions $\FuncSet$, where
    every pair of associated curves intersect at most $\intersections$
    times, where $\intersections$ is some constant.  Then
    $\Ex{\overlayX{\OFuncSet_n}} = O(\dsFunc{\intersections}{n})$.
\end{lemma}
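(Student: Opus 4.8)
The plan is to prove \lemref{nice:functions:l:e} by backwards analysis, exploiting the fact that inserting a function can only push the lower envelope \emph{downward}. The key structural observation is an exact formula: $m_i$ equals the number of vertices on the lower envelope of $\FuncSet_i$ that are incident to $f_i$. Indeed, let $v$ be a vertex of the lower envelope of $\FuncSet_i$, i.e.\ the transversal crossing of two curves $f_a,f_b\in\FuncSet_i$ whose common value at the $x$-coordinate $\xi_v$ of $v$ is the pointwise minimum of all functions of $\FuncSet_i$ at $\xi_v$. If neither curve is $f_i$, then $f_a,f_b\in\FuncSet_{i-1}$ and, since $\FuncSet_{i-1}\subseteq\FuncSet_i$, every function of $\FuncSet_{i-1}$ still lies on or above this common value at $\xi_v$; hence $v$ is already a vertex of the lower envelope of $\FuncSet_{i-1}$ and is not counted by $m_i$. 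Conversely, $f_i\notin\FuncSet_{i-1}$, so (using general position to identify a vertex with its unordered pair of curves) no vertex incident to $f_i$ can lie on the lower envelope of $\FuncSet_{i-1}$, and therefore every lower-envelope vertex of $\FuncSet_i$ incident to $f_i$ is counted by $m_i$. This also shows that once a vertex is buried it never returns, so $\overlayX{\OFuncSet_n}=\sum_{i=1}^n m_i$ genuinely counts distinct vertices without repetition.

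Next I would apply backwards analysis to each term. Conditioned on the unordered prefix $\FuncSet_i$, the last-inserted function $f_i$ is uniform over the $i$ functions of $\FuncSet_i$, and each vertex of the lower envelope of $\FuncSet_i$ is incident to exactly two of them; hence $\Ex{m_i \mid \FuncSet_i}$ equals $2/i$ times the number of vertices on the lower envelope of $\FuncSet_i$, which by \obsref{ds} is $O\pth{\dsFunc{\intersections}{i}/i}$ for every realization of $\FuncSet_i$ (this bound does not depend on which $i$-subset $\FuncSet_i$ is). By linearity of expectation, $\Ex{\overlayX{\OFuncSet_n}} = \sum_{i=1}^n \Ex{m_i} = O\pth{\sum_{i=1}^n \dsFunc{\intersections}{i}/i}$.

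It then remains to check that $\sum_{i=1}^n \dsFunc{\intersections}{i}/i = O\pth{\dsFunc{\intersections}{n}}$, which is what kills the spurious logarithmic factor. Splitting the sum at $n/2$: the terms with $i>n/2$ satisfy $\dsFunc{\intersections}{i}/i \le 2\dsFunc{\intersections}{n}/n$ (monotonicity of $\dsFunc{\intersections}{\cdot}$), so they contribute $O\pth{\dsFunc{\intersections}{n}}$, while the terms with $i\le n/2$ form the same sum with $n$ replaced by $n/2$. Unrolling this recursion gives $O\pth{\sum_{k\ge 0}\dsFunc{\intersections}{n/2^k}}$, and the superadditivity of $\dsFunc{\intersections}{\cdot}$ (a standard property of Davenport--Schinzel sequences) yields $\dsFunc{\intersections}{n/2^k}\le \dsFunc{\intersections}{n}/2^k$, so the geometric series sums to $O\pth{\dsFunc{\intersections}{n}}$; alternatively, one may simply invoke the standard fact that $\dsFunc{\intersections}{n}/n$ is non-decreasing (see \cite{sa-dsstg-95}). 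Combining the three steps gives $\Ex{\overlayX{\OFuncSet_n}} = O\pth{\dsFunc{\intersections}{n}}$, as claimed.

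The only genuinely delicate point, and the step I would treat most carefully, is the first one: establishing that no lower-envelope vertex can be created, destroyed, and then re-created during the incremental construction, so that the overlay complexity $\sum_i m_i$ really coincides with $\sum_i(\text{number of lower-envelope vertices incident to }f_i)$. This is exactly the monotonicity of the lower envelope under insertion, together with general position. Everything after it is routine backwards analysis plus the textbook Davenport--Schinzel summation bound above.
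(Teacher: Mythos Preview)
Your proposal is correct and follows essentially the same approach as the paper: backwards analysis shows $\Ex{m_i}=\frac{2}{i}\cardin{\LE{\FuncSet_i}}=O\pth{\dsFunc{\intersections}{i}/i}$, and then summing using the monotonicity of $\dsFunc{\intersections}{i}/i$ gives $O\pth{\dsFunc{\intersections}{n}}$. The paper is terser---it packages your structural observation about $m_i$ into the single line ``$X_v$ is $1$ if and only if one of $v$'s two defining functions was the last function''---and for the final summation it directly invokes the monotonicity of $\dsFunc{\intersections}{i}/i$ from \cite{sa-dsstg-95} rather than your dyadic splitting, but the argument is the same.
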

\begin{proof}
    By definition we have that
    $\Ex{\bigl. \overlayX{\OFuncSet_n}} = \allowbreak
    \Ex{\bigl. \sum_{i=1}^n m_i} \allowbreak = \sum_{i=1}^n \Ex{m_i}$,
    where $m_i$ is the number of vertices on the lower envelope of
    $\FuncSet_i$ that are not present on the lower envelope of
    $\FuncSet_{i-1}$.  Consider a vertex, $v$, on the lower envelope
    of $\FuncSet_i$, for some $1\leq i\leq n$.  Let $X_v$ be an
    indicator variable which is 1 if and only if $v$ was not present
    in $\FuncSet_{i-1}$.  Since $\OFuncSet$ is a random permutation of
    $\FuncSet$, it holds that $\OFuncSet_i$ is a random permutation of
    $\FuncSet_i$.  Since any vertex on the lower envelope is defined
    by exactly two functions from $\FuncSet_i$, it holds that
    $\Ex{X_v} = 2/i$, since $X_v$ is $1$ if and only if one of $v$'s
    two defining functions was the last function, $f_i$, in the
    permutation $\OFuncSet_i$.  Therefore,
    \begin{align*}
        \Ex{\bigl. m_i}%
        &=%
        \Ex{\sum_{v\in \LE[]{\FuncSet_i}} X_v}%
        =%
        \sum_{v\in \LE[]{\FuncSet_i}} \Ex{\bigl. X_v}%
        =%
        \sum_{v\in \LE[]{ \FuncSet_i}} \frac{2}{i}%
        =%
        \frac{2 \cardin{\LE{\FuncSet_i}}}{i},
    \end{align*}
    where $\LE[]{\FuncSet_i}$ is the set of vertices on the lower
    envelope of $\FuncSet_i$. By \obsref{ds},
    $\cardin{\LE[]{\FuncSet_i}} = O\pth{ \dsFunc{\intersections}{i}}$.
    We thus have
    \begin{align*}
        \Ex{\bigl. \overlayX{\OFuncSet} }%
        =%
        \sum_{i=1}^n \Ex{ \bigl. m_i}%
        \leq%
        \sum_{i=1}^n O\pth{ \frac{\dsFunc{\intersections}{i}}{i} }%
        \leq%
        \sum_{i=1}^n O\pth{ \frac{\dsFunc{\intersections}{n}}{n} }%
        =%
        O\pth{\bigl. \dsFunc{\intersections}{n}},
    \end{align*}
    as $\dsFunc{\intersections}{i}/i$ is a monotonically increasing
    function \cite{sa-dsstg-95}. %
    \DCGVer{{\qed}}
\end{proof}
\begin{corollary}
    \corlab{intersections}%
    Let $\ell$ be a bisector defined by a pair of disjoint sites
    $\site_1$ and $\site_2$.  Let $\SiteSet$ be a set of $n$ sites
    containing $\site_1$ and $\site_2$ (and satisfying the conditions
    of \remref{assumptions}), and let
    $\OSiteSet=\permut{\site_1, \site_2, \ldots, \site_n}$ be a
    permutation of $\SiteSet$, such that
    $\OSuffix{3} = \permut{\site_3, \site_4, \ldots, \site_n} $ is a
    random permutation.  Finally, let $\VCellPrefix{i}$ denote the
    Voronoi cell of $\site_i$ in $\VorX{\OSiteSet_i}$.
    
    The expected number of intersection points of $\ell$ with the
    boundaries of
    $\VCellPrefix{3}, \VCellPrefix{4}, \ldots, \VCellPrefix{n}$, is
    $O(\dsFunc{\intersections}{n})$, for some constant
    $\intersections$.%
\end{corollary}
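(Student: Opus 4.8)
The plan is to reduce the corollary to the one-dimensional statement of \lemref{nice:functions:l:e}, applied to the restrictions of the site distance functions to the curve $\ell=\bisector(\site_1,\site_2)$. First I would fix a homeomorphism $\gamma\colon\Re\to\ell$, which exists by \remref{assumptions}(A) (the complement of $\ell$ has two unbounded components). Since every $\gamma(t)\in\ell$ satisfies $\distSet{\gamma(t)}{\site_1}=\distSet{\gamma(t)}{\site_2}$, the distance functions of $\site_1$ and $\site_2$ coincide along $\ell$, so I would drop the redundant one and work with the $n-1$ functions $\func_j\colon\Re\to\Re$ given by $\func_j(t)=\distSet{\gamma(t)}{\site_j}$ for $j\in\brc{1,3,4,\ldots,n}$. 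Each $\func_j$ is continuous, and for $j\neq k$ the curves $\func_j$ and $\func_k$ meet exactly at those $t$ with $\gamma(t)\in\bisector(\site_j,\site_k)\cap\ell$, a set of at most $\intersections$ transversal crossings by \remref{assumptions}(C) (in general position, $\bisector(\site_j,\site_k)\neq\ell$). Thus $\OFuncSet=\permut{\func_1,\func_3,\ldots,\func_n}$ meets the hypotheses of \lemref{nice:functions:l:e}, the only difference being that $\func_1$ is pinned at the front while $\permut{\func_3,\ldots,\func_n}$ is a random permutation. Finally, note that intersecting $\VorX{\OSiteSet_i}$ with $\ell$ is precisely the lower envelope of $\func_1,\func_3,\ldots,\func_i$: a point $\gamma(t)$ lies in $\VCellPrefix{i}$ iff $\func_i(t)=\min\brc{\func_1(t),\func_3(t),\ldots,\func_i(t)}$.

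Next I would prove the key claim that, for each $i\geq 3$, the points of $\ell\cap\partial\VCellPrefix{i}$ are in bijection with the vertices counted by $m_i$, the vertices that appear on the lower envelope of $\brc{\func_1,\func_3,\ldots,\func_i}$ but are absent from the lower envelope of $\brc{\func_1,\func_3,\ldots,\func_{i-1}}$. For the forward direction: since $\VCellPrefix{i}$ is closed, any $\gamma(t)\in\partial\VCellPrefix{i}$ has $\func_i(t)=\min_{j\leq i}\func_j(t)$ and, lying on a Voronoi edge or vertex of $\VorX{\OSiteSet_i}$, satisfies $\func_k(t)=\func_i(t)$ for some $k<i$ (a tie with $\site_2$ being the same as one with $\site_1$ along $\ell$); general position --- transversality of curve crossings, no three bisectors through one point, and, via \remref{assumptions}(C), no two distinct bisectors sharing a subarc --- then forces $t$ to be a genuine transversal vertex of the lower envelope of $\brc{\func_1,\func_3,\ldots,\func_i}$ having $\func_i$ as one of its two defining functions, and such a vertex cannot lie on the previous envelope, which does not involve $\func_i$. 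Conversely, every newly created vertex at this step has $\func_i$ as a defining function (a vertex defined by two earlier functions that persists on the envelope of the larger set already lies on that of the smaller one), and at such a vertex the envelope value is attained by $\func_i$, so the corresponding point of $\ell$ lies in the closed cell $\VCellPrefix{i}$, hence on $\partial\VCellPrefix{i}$. The hard part will be exactly this bookkeeping --- certifying that there are no tangencies of $\ell$ with a Voronoi edge and no points where $\partial\VCellPrefix{i}$ meets $\ell$ while $\ell$ stays locally inside $\VCellPrefix{i}$ --- which is where the constant-intersection conventions on the bisectors get used.

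It then remains to sum and take expectations: $\sum_{i=3}^{n}\cardin{\ell\cap\partial\VCellPrefix{i}}=\sum_{i=3}^{n}m_i=\overlayX{\OFuncSet}$, the overlay complexity of $\OFuncSet$ (insert $\func_1$, then $\func_3,\ldots,\func_n$ in their random order). The proof of \lemref{nice:functions:l:e} then applies with only a cosmetic change: conditioning on the random set $\brc{\func_1,\func_3,\ldots,\func_i}$ --- always $\func_1$ together with an $(i-2)$-element subset chosen uniformly from $\brc{\func_3,\ldots,\func_n}$ --- a vertex of its lower envelope defined by two of its functions is new at the step in question only if one of those two functions was the last of the $i-2$ non-$\func_1$ functions inserted, which has probability at most $2/(i-2)$; with $\cardin{\LE[]{\brc{\func_1,\func_3,\ldots,\func_i}}}=O(\dsFunc{\intersections}{i})$ from \obsref{ds} this gives $\Ex{m_i}=O\pth{\dsFunc{\intersections}{i}/(i-2)}$. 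Since $\dsFunc{\intersections}{i}/i$ is monotonically increasing and $i-2=\Theta(i)$ for $i\geq3$, summing gives $\sum_{i=3}^n\Ex{m_i}=O\pth{\dsFunc{\intersections}{n}}$, which is the claimed bound. The only deviation from \lemref{nice:functions:l:e} is the one function pinned at the front of the permutation, and it only affects the constants.
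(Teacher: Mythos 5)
Your proof is correct and takes essentially the same approach as the paper: parameterize distance functions along $\ell$, observe that the required assumptions for \lemref{nice:functions:l:e} hold, identify crossings of $\ell$ with $\partial\VCellPrefix{i}$ as newly appearing lower-envelope vertices, and invoke that lemma (the paper handles the extra function at the front with a one-line "constant factor blow up" remark, whereas you spell out the $2/(i-2)$ bound explicitly, but this is the same argument).
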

\begin{proof}
    Consider the distance between any site and a point on $\ell$.
    This distance can be viewed as a parameterized real valued
    function as we move along $\ell$.  For a given site $\site_i$ let
    us denote this function $\func_i(t)$ (where $t$ is the location
    along $\ell$).  Clearly such distance functions are continuous as
    we move along any curve, and in particular along $\ell$.  Consider
    a point $t$ where two functions intersect, i.e. $\func_i(t) =
    \func_j(t)$ for some $i\neq j$.  This corresponds to a point on
    the bisector of $\site_i$ and $\site_j$.  Since $\ell$ is a
    bisector and we assumed that any two bisectors intersect at most a
    constant number of times, for any fixed $i$ and $j$, there are at
    most a constant number of points along $\ell$ such that
    $\func_i(t) = \func_j(t)$.  Therefore, the functions $\func_i$
    representing the distance to site $\site_i$ satisfy the conditions
    to apply \lemref{nice:functions:l:e}.
    
    Consider a Voronoi edge on the boundary of some cell in
    $\VCellPrefix{3}, \ldots, \VCellPrefix{n}$ which crosses $\ell$.
    Each such edge is defined by a subset of the bisector of two
    sites, and let these sites be $\site_i$ and $\site_j$ where $i<j$.
    We are interested at the point when the edge crosses $\ell$, and
    therefore this corresponds to a point $t$ on $\ell$ such that
    $\func_i(t) = \func_j(t)$.  Moreover, in order for this edge to
    appear on the boundary of $\VCellPrefix{j}$ we have that
    $\func_i(t) = \func_j(t) < \func_k(t)$ for all
    \begin{math}
        k \in \brc{1, \ldots, j } \setminus \brc{ i, j}.
    \end{math}
    In other words, the point where $\func_i(t) = \func_j(t)$ must
    appear on the lower envelope of $\func_1(t), \ldots, \func_j(t)$.
    Therefore, in order to bound the total expected number of
    intersection points of edges with $\ell$, it suffices to bound the
    total expected number of vertices ever seen on the lower envelope
    of these functions when inserting the sites in a random order
    $\OSuffix{3}$ (note that one also has to factor in the complexity
    of the lower envelope due to $\site_1$ and $\site_2$, but this
    only contributes a constant factor blow up).  The result now
    readily follows from \lemref{nice:functions:l:e}. %
    \DCGVer{{\qed}}
\end{proof}

\subsubsection{Bounding the overlay complexity of %
   Voronoi cells of sites}

The following lemma uses an interesting backward-forward analysis that
the authors had not encountered before, and might be of independent
interest.

\begin{lemma}
    \lemlab{overlay:2}%
    %
    Let $\OSiteSet = \permut{\site_1, \ldots, \site_n}$ be a random
    permutation of a set $\SiteSet$ of sites in the plane, complying
    with the conditions of \remref{assumptions} and
    \remref{assumptions:2}.  Let $\VCellPrefix{i}$ denote the Voronoi
    cell of $\site_i$ in $\VorX{\OSiteSet_i}$.  The expected total
    complexity of the overlay arrangement $\Arr = \Arr\pth{
       \VCellPrefix{1}, \ldots, \VCellPrefix{n}}$ is $
    O(\dsFunc{\intersections}{n} \log n)$, for some constant
    $\intersections$.
\end{lemma}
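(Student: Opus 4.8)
The plan is to bound $\Ex{\cardin{\Arr}}$ by separately bounding the number of vertices of the overlay arrangement that come from each of the two natural sources: (i) original vertices, i.e. points that already were vertices (Voronoi vertices or $x$-extremal points) on the boundary of some single cell $\VCellPrefix{i}$; and (ii) overlay vertices, i.e. points where the boundary of $\VCellPrefix{i}$ crosses the boundary of $\VCellPrefix{j}$ for $i \ne j$. By the general-position and bisector-complexity assumptions in \remref{assumptions}, the total complexity $\cardin{\Arr}$ is within a constant factor of the number of such vertices, so bounding these two counts suffices.

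For source (i), each $\VCellPrefix{i} = \VorCell{\site_i}{\OSiteSet_i}$ is a single Voronoi cell in an abstract Voronoi diagram of $i$ sites, hence of constant complexity in expectation in the following sense: the expected complexity of a single new cell created during a randomized incremental construction is $O(1)$ (amortized, this is the standard backward-analysis fact that $\sum_i \Ex{\cardin{\partial \VCellPrefix{i}}} = O(n)$ for abstract Voronoi diagrams, since the total complexity of $\VorX{\OSiteSet_i}$ is $O(i)$ and each site is equally likely to be the last one inserted). Summing over $i$ gives $O(n)$ original vertices in expectation, which is dominated by the bound we are aiming for.

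For source (ii), the main work, I would fix a pair $i < j$ and count crossings between $\partial \VCellPrefix{i}$ and $\partial \VCellPrefix{j}$. The key reduction: $\partial \VCellPrefix{j}$ is composed of pieces of bisectors $\bisector(\site_j, \site_k)$ for $k < j$, and I want to apply \corref{intersections} with $\ell$ ranging over these bisector pieces. More precisely, I would use a Clarkson–Shor / backward-forward style argument. Think of a crossing point $v$ between $\partial\VCellPrefix{i}$ and $\partial\VCellPrefix{j}$: it lies on a bisector $\bisector(\site_a,\site_b)$ (realizing an edge of $\VCellPrefix{i}$, so $\{a,b\} \ni i$ and $a,b \le i$) and on a bisector $\bisector(\site_c,\site_d)$ (realizing an edge of $\VCellPrefix{j}$, so $\{c,d\}\ni j$ and $c,d\le j$); thus $v$ has a defining set of at most four sites among the first $j$, and in particular $v$ is determined once we know which bisector $\ell := \bisector(\site_a,\site_b)$ it lies on together with the fact that $v$ lies on the lower envelope of the distance functions $\func_1,\dots,\func_j$ restricted to $\ell$. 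So, conditioning on the set $\SiteSet_i$ of the first $i$ sites (which fixes $\VCellPrefix{i}$ and hence the $O(1)$-in-expectation many bisector pieces forming $\partial\VCellPrefix{i}$), \corref{intersections} bounds the expected number of crossings of one such piece $\ell$ with $\partial\VCellPrefix{3},\dots,\partial\VCellPrefix{n}$ by $O(\dsFunc{\intersections}{n})$. Since $\partial\VCellPrefix{i}$ has $O(1)$ pieces in expectation and there are $n$ choices of $i$, this would naively give $O(n\,\dsFunc{\intersections}{n})$, which is too weak by a factor of roughly $n/\log n$.

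**The hard part**, therefore, is shaving this down to $O(\dsFunc{\intersections}{n}\log n)$, and I expect this is exactly where the ``backward-forward analysis'' mentioned before the lemma statement enters. The idea: a crossing of $\partial\VCellPrefix{i}$ with $\partial\VCellPrefix{j}$ should be charged not over all $n$ values of $i$, but in a way that telescopes. Run the construction forward; when $\site_i$ is inserted creating cell $\VCellPrefix{i}$, look \emph{backward} at which later cells $\VCellPrefix{j}$ ($j>i$) its boundary will cross, and apply \corref{intersections} to each bisector piece of $\partial\VCellPrefix{i}$ to bound these forward crossings by $O(\dsFunc{\intersections}{n})$ \emph{per piece}; but crucially, by backward analysis the expected number of pieces of $\partial\VCellPrefix{i}$ is $O(1)$ and moreover the relevant ``budget'' for crossings that a given bisector piece can accumulate, summed appropriately, telescopes. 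Concretely I would set up, for each bisector $\ell$ that ever appears as an edge of some $\VCellPrefix{i}$, the quantity $\dsFunc{\intersections}{n}$ as an upper bound on all crossings of $\ell$ with all $\partial\VCellPrefix{j}$, and then bound the expected number of distinct bisectors $\ell$ that ever appear as edges over the whole RIC: this is again $O(n)$ by the source-(i) argument — no, that still gives $O(n\,\dsFunc{\intersections}{n})$. The resolution must instead be a finer double-counting: apply \corref{intersections} with the roles arranged so that for each \emph{pair} $(\ell,\text{cell})$ the count is $O(\dsFunc{\intersections}{n}/n)$ in expectation and one sums $n$ of them, i.e. redo the proof of \lemref{nice:functions:l:e} at the level of the two-dimensional overlay: for a fixed $j$, the edges of $\partial\VCellPrefix{j}$ crossing a fixed earlier boundary arc correspond to vertices newly appearing on a lower envelope at step $j$, whose expected number is $O(\dsFunc{\intersections}{n}/j)$ by the $2/i$-probability argument, and summing $\sum_j O(\dsFunc{\intersections}{n}/j) = O(\dsFunc{\intersections}{n}\log n)$ yields the bound. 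Making this charging precise — identifying the right ``object'' whose defining set has bounded size and whose appearance probability is $O(1/j)$, and verifying the conditions of \thmref{moments} and \lemref{nice:functions:l:e} apply — is the crux of the proof.
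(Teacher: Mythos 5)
Your proposal correctly identifies the ingredients (backward analysis, \corref{intersections}, Clarkson--Shor), and you rightly recognize both that the naive application gives only $O(n\,\dsFunc{\intersections}{n})$ and that the fix must telescope as $\sum_j 1/j$. But you stop short of the actual mechanism, and the patch you sketch points in a subtly wrong direction, so there is a genuine gap.

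The missing idea is twofold. First, the per-edge forward-crossing count is not $O(\dsFunc{\intersections}{n})$ but the much sharper $O(\dsFunc{\intersections}{\cardin{\KillSet{\edge}}})$: once an edge $\edge$ appears on $\partial\VCellPrefix{i}$, the only later cells $\VCellPrefix{k}$ whose boundary can cross $\edge$ are those with $\site_k\in\KillSet{\edge}$ (since if $\site_k$ does not conflict with $\edge$ then $\VCellPrefix{k}\subseteq\VorCell{\site_k}{\DefSet{\edge}\cup\{\site_k\}}$ misses $\edge$ entirely), so \corref{intersections} is applied only to the random suffix restricted to the conflict list. Second --- and this is the titular ``backward-forward'' trick --- one conditions on the unordered prefix $\SiteSet_i$, under which the forward quantity $Z_\edge$ (number of future crossings on $\edge$) depends only on the internal ordering of the suffix, while the backward indicator $X_\edge$ (did $\edge$ get created in round $i$ on $\partial\VCellPrefix{i}$, with $\Ex{X_\edge}\le 4/i$) depends only on the internal ordering of the prefix; hence they are conditionally independent and $\Ex{Z_\edge X_\edge\mid\SiteSet_i}$ factors. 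This independence is precisely what lets you multiply the $O(1/i)$ probability against the $\dsFunc{\intersections}{\cardin{\KillSet{\edge}}}$ bound, after which \thmref{moments} yields $\Ex{\sum_{\edge\in\VorX[]{\SiteSet_i}}\dsFunc{\intersections}{\cardin{\KillSet{\edge}}}}=O(i\,\dsFunc{\intersections}{n/i})$ and the sum $\sum_i O(\dsFunc{\intersections}{n}/i)$ gives the claimed bound. Your suggested alternative --- that each pair $(\ell,\text{cell})$ should contribute $O(\dsFunc{\intersections}{n}/n)$ --- does not materialize as stated; without the conditioning-on-$\SiteSet_i$ decoupling and the conflict-list refinement, there is no object with the ``$O(1/j)$ appearance probability'' and ``bounded defining set'' that you correctly sense is needed.
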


\begin{proof}
    As discussed in the beginning of \secref{compat}, in order to
    bound $\cardin{\Arr}$ it suffices to bound the number of vertices
    in the arrangement.  By planarity (and since there are no isolated
    vertices) it also suffices to bound the number of edges.
    
    Let $\arcsX{\VCellPrefix{ i }}$ be the Voronoi edges in $\VorX{
       \OSiteSet_i}$ that appear on the boundary of $\VCellPrefix{i}$.
    Such an arc $\bisector \in \arcsX{\VCellPrefix{ i }}$, created in
    the $i$\th iteration, is going to be broken into several edges in
    the final overlay arrangement $\Arr$.  Let $Z_\bisector$ be the
    number of such edges that arise from $\bisector$.  Our goal is to
    bound the quantity $\Ex{\sum_i \sum_{\bisector \,\in\, \arcsX[]{
             \VCellPrefix{i} }} Z_\bisector}$.

    \smallskip%
    Each Voronoi edge, $\edge$, in the Voronoi diagram of a subset of
    the sites, is defined by a constant number of sites (the two sites
    whose bisector it is on, and the two sites that delimit it), and
    it has an associated stopping set. The stopping set (i.e.,
    conflict list), $\KillSet{\edge}$, is the set of all sites whose
    insertion prevents $\edge$ from appearing in the Voronoi diagram in
    its entirety.
    
    For the rest of the proof we fix the prefix $\SiteSet_i$; that is,
    fix the sites that are the first $i$ sites in the permutation
    $\OSiteSet$, but not their internal ordering in the
    permutation. Naturally, this also determines the content of the
    suffix $\Suffix{i+1} = \SiteSet \setminus \SiteSet_i$.  Consider
    an edge, $\edge$, which lies on a bisector defined by sites
    $\site_j$ and $\site_i$, where $j<i$.  Then since $\OSuffix{i+1}$
    is a random permutation of $\Suffix{i+1}$, \corref{intersections}
    implies that
    \begin{math}
        \Ex{\bigl.  Z_\edge } = O\pth{ \Bigl. \dsFunc{\intersections}{
              \bigl.\cardin{\KillSet{\edge}}}},
    \end{math}
    where $\intersections$ is some constant, and the expectation is
    over the internal ordering of $\OSuffix{i+1}$.
    
    For an edge $\edge \in \VorX{\SiteSet_i}$, let $X_\edge$ be an
    indicator variable that is one if $\edge$ was created in the
    $i$\th iteration, and furthermore, it lies on the boundary of
    $\VCellPrefix{i}$.  Observe that $\Ex{\bigl.  X_\edge} \leq 4/i$,
    as an edge appears for the first time in round $i$ only if one of
    its (at most) four defining sites was the $i$\th site inserted.
    
    Let
    \begin{math}
        Y_i%
        =%
        \sum_{\bisector \,\in\, \arcsX[]{ \VCellPrefix{i} }}
        Z_\bisector%
        =%
        \sum_{\edge \in \VorX[]{\SiteSet_i}} Z_ \edge X_\edge
    \end{math}
    be the total (forward looking) complexity contribution to the
    final arrangement $\Arr$ of arcs added in round $i$.  As we
    assumed $\SiteSet_i$ is fixed, hence correspondingly
    $\Suffix{i+1}$ is fixed.  Let $\edge$ be some edge in
    $\VorX[]{\SiteSet_i}$.  Observe that the value $Z_\edge$ depends
    only on the internal ordering $\OSuffix{i+1}$ of the suffix
    $\Suffix{i+1}$, and the indicator variable $X_\edge$ depends only
    on the internal ordering $\OSiteSet_i$ of the prefix $\SegSet_i$.
    In other words, for a fixed $\SiteSet_i$ and edge $\edge$ in
    $\VorX[]{\SiteSet_i}$, the random variables $Z_\edge$ and
    $X_\edge$ are independent.  We thus have
    \begin{align*}
        \Ex{Y_i\sep{ \SiteSet_i}\!}%
        &=%
        \Ex{ \sum\nolimits_{\edge \in \VorX[]{\SiteSet_i}} Z_\edge
           X_\edge \sep{\Bigl.  \SiteSet_i}}%
        =%
        \sum_{\edge \in \VorX[]{\SiteSet_i}}%
        \Ex{Z_\edge \sep{ \SiteSet_i } } \Ex{X_\edge \sep{
              \SiteSet_i}}%
        \DCGVer{\\&}%
        =%
        \sum_{\edge \in \VorX[]{\SiteSet_i}}%
        O\pth{\Bigl. \dsFunc{\intersections}{
              \cardin{\KillSet{\edge}}\bigr.}}  \Ex{X_\edge \sep{
              \SiteSet_i}}%
        \\&=%
        O\pth{\frac{1}{i}\sum_{\edge \in \VorX[]{\SiteSet_i}}
           \dsFunc{\intersections}{\Bigl. \cardin{\KillSet{\edge }}}}.
    \end{align*}

    We can now get a bound on the expected value of $Y_i$, as we have
    a bound for this quantity when conditioned on $\SiteSet_i$, as
    $\Ex{Y_i\bigr.} = \Ex{ \bigl. \Ex{Y_i \mid { \SiteSet_i}}}\Bigr.$.
    Specifically, we will apply the Clarkson-Shor technique, described
    in \secref{clarkson}, where the set of elements is the set of
    sites $\SiteSet$, the prefix $\SiteSet_i$ is the random sample,
    and the edges of $\VorX[]{\SiteSet_i}$ form the set of defined
    objects. Since the complexity of an unweighted Voronoi diagram of
    sites is always linear, the Clarkson-Shor technique (i.e.,
    \thmref{moments}) implies
    \begin{math}
        \nu_i%
        =%
        \Ex{\sum_{\edge \in \VorX[]{\SiteSet_i}}
           \dsFunc{\intersections}{\bigl.\cardin{\KillSet{\edge}}} }
        =%
        O\pth{\bigl. \Ex{\bigl.\cardin{\VorX[]{\SiteSet_i}}}
           \dsFunc{\intersections}{n/i}}= O\pth{\bigl. i \,
           \dsFunc{\intersections}{n/i}},
    \end{math}
    where the randomness here is on the choice of the sites that are
    in the $i$\th prefix $\SiteSet_i$.
    
    The total complexity of $\Arr$ is asymptotically bounded by $\sum_i
    Y_i$, and we have
    \begin{align*}
        \Ex{\sum_i Y_i}%
        &=%
        \sum_i \Ex{ \bigl. Y_i}%
        =%
        \sum_i \Ex{\Bigl.\Ex{Y_i\sep{ \SiteSet_i}}}%
        \DCGVer{\\&}%
        =%
        \sum_i O\pth{ %
           \frac{1}{i}%
           \ExChar \Bigl[\,\,\! {%
              \sum_{\edge \in \VorX[]{\SiteSet_i}}
              \dsFunc{\intersections}{\bigl.  \cardin{\KillSet{\edge
                    }}%
              } } \Bigr]%
        }%
        \displaybreak[0] \\%
        &=%
        O\pth{\sum_i \frac{1}{i} \nu_i }%
        =%
        O\pth{\sum_i \dsFunc{\intersections}{n/i} }%
        =%
        O\pth{\sum_i \frac{\dsFunc{\intersections}{n}}{i}}%
        \DCGVer{\\&}%
        =%
        O\pth{ \bigl. \dsFunc{\intersections}{n} \log n}.%
        \DCGVer{\tag*{\qed}}
    \end{align*} %
\end{proof}

\section{The Result and Applications}
\seclab{resultApplications}

We now consider the various applications of our technique.  In
\thmref{points} it was already observed that a bound of $O(n \log^3
n)$ holds on the expected complexity of the weighted Voronoi diagram
when the sites are points.  We can now extend this result to more
general sites by combining \thmref{sitesTheorem} and
\lemref{overlay:2}.  We first present this more general result, with a
slightly tightened analysis (specifically a $\log$ factor
improvement), and then describe the applications of this result.

\subsection{The result}
\seclab{tighter}

\begin{theorem}
    \thmlab{tighter}%
    %
    Let $\SiteSet$ be a set of $n$ sites in the plane, satisfying the
    conditions of \remref{assumptions} and \remref{assumptions:2},
    where for each site we independently sample a weight from some
    distribution $\pdf$ over $\Re^+$.  Then the expected complexity of
    the multiplicative Voronoi diagram of $\SiteSet$ is
    $O\pth{\bigl.  \dsFunc{\intersections}{n}\, \worstX{\log n} }$.
\end{theorem}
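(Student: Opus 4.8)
The plan is to combine the two main technical ingredients already established: the structural reduction in \thmref{sitesTheorem}, which says that the expected complexity of the multiplicative diagram is $O\pth{\Ex{\cardin{\Arr}}\,\worstX{\log n}}$, and the overlay bound in \lemref{overlay:2}, which says that $\Ex{\cardin{\Arr}} = O\pth{\dsFunc{\intersections}{n}\log n}$ for the arrangement $\Arr = \Arr\pth{\VCellPrefix{1},\ldots,\VCellPrefix{n}}$ formed by overlaying the prefix Voronoi cells $\VCellPrefix{i} = \VorCell{\site_i}{\OSiteSet_i}$. A naive composition of these two yields $O\pth{\dsFunc{\intersections}{n}\,\worstX{\log n}\,\log n}$, which is off from the claimed bound by a $\log n$ factor. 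The point of the ``slightly tightened analysis'' promised in \secref{tighter} is to shave this extra logarithm.

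The first thing I would do is re-examine the proof of \thmref{sitesTheorem} to see exactly where that stray $\log n$ enters. It enters because in \thmref{sitesTheorem} we pessimistically bound \emph{every} cell $\Delta$ of the vertical decomposition of $\Arr$ as contributing $O\pth{\worstX{\cardin{\candidSet_\Delta}}}$ where we replace $\cardin{\candidSet_\Delta}$ by its global high-probability upper bound $O(\log n)$, giving $O\pth{\cardin{\Arr}\cdot\worstX{\log n}}$; then we separately substituted $\Ex{\cardin{\Arr}} = O\pth{\dsFunc{\intersections}{n}\log n}$. Instead, I would charge more carefully. The key observation is that in \lemref{overlay:2} the edges (and hence vertices) of $\Arr$ are generated in rounds: the arcs $\arcsX{\VCellPrefix{i}}$ added in round $i$ get subdivided into $Z_\bisector$ pieces, and the analysis there bounds $\Ex{\sum_i\sum_{\bisector\in\arcsX{\VCellPrefix{i}}}Z_\bisector}$ via the Clarkson--Shor moment bound as $\sum_i O\pth{\dsFunc{\intersections}{n/i}}$. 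The $\log n$ in \lemref{overlay:2} comes precisely from $\sum_i \dsFunc{\intersections}{n/i} \le \sum_i \dsFunc{\intersections}{n}/i = O\pth{\dsFunc{\intersections}{n}\log n}$. So the plan is to keep the summation $\sum_i O\pth{\dsFunc{\intersections}{n/i}}$ \emph{unsummed} and multiply it against $\worstX{\log n}$ at the level of individual cells, rather than first summing to $\cardin{\Arr}$ and then multiplying.

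Concretely: triangulate (vertical-decompose) $\Arr$ and, for each cell $\Delta$, charge its weighted-Voronoi contribution $O\pth{\worstX{\cardin{\candidSet_\Delta}}}$. By \corref{forAll}, with high probability $\cardin{\candidSet_\Delta}=O(\log n)$ for \emph{all} cells simultaneously, so this is $O\pth{\worstX{\log n}}$ per cell, and summing over cells gives $O\pth{\cardin{\Arr}\,\worstX{\log n}}$ — this still seems to give the same bound, so the real trick must be slightly different: I would instead re-run the round-by-round accounting of \lemref{overlay:2} but with each newly created edge/vertex in round $i$ weighted by $\worstX{\log n}$ at the point where Clarkson--Shor is applied, so the quantity bounded becomes $\Ex{\sum_i Y_i\cdot\worstX{\log n}}$, which by the identical calculation is $O\pth{\worstX{\log n}\sum_i\dsFunc{\intersections}{n/i}}$. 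The improvement then has to come from a sharper estimate of $\sum_i \dsFunc{\intersections}{n/i}$: since $\dsFunc{\intersections}{\cdot}$ is only slightly superlinear, $\dsFunc{\intersections}{n/i}/(n/i)$ is monotone, and more careful bookkeeping (or simply noting $\sum_i \dsFunc{\intersections}{n/i} = O\pth{\dsFunc{\intersections}{n}}$ when one does \emph{not} over-bound $\dsFunc{\intersections}{n/i}$ by $\dsFunc{\intersections}{n}/i$ but instead sums the series directly, which converges to $O(n\cdot\dsFunc{\intersections}{n}/n) = O(\dsFunc{\intersections}{n})$ up to the harmonic-type tail) yields the saved logarithm.

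The main obstacle, and the step I would spend the most care on, is exactly this last point: whether $\sum_{i=1}^n \dsFunc{\intersections}{n/i} = O\pth{\dsFunc{\intersections}{n}}$ or genuinely $\Theta\pth{\dsFunc{\intersections}{n}\log n}$. If the latter, the tightening must instead exploit that the $\log n$ factor from \corref{forAll} and the $\log n$ factor from the overlay are ``the same randomness'' — i.e., one should not pay both independently. The honest route is likely: condition on the prefix $\SiteSet_i$ as in \lemref{overlay:2}, observe that within a round-$i$ arc the number of subdivided pieces $Z_\bisector$ is $O\pth{\dsFunc{\intersections}{\cardin{\KillSet{\bisector}}}}$, and that each such piece lies in a cell whose candidate set has size $O(\log n)$ w.h.p.; then bound the total weighted-Voronoi cost as $\sum_i \Ex{Y_i}\cdot\worstX{\log n}$ and finally invoke the (possibly sharper) summation. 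I would write the argument so that the $\worstX{\log n}$ is pulled out \emph{before} the Clarkson--Shor step, reducing the whole thing to exactly the computation already carried out in \lemref{overlay:2} times $\worstX{\log n}$, and then argue the final sum is $O\pth{\dsFunc{\intersections}{n}}$ by a direct (not term-by-term-dominated) evaluation of $\sum_i \dsFunc{\intersections}{n/i}$, which I believe is where the paper's ``log factor improvement'' genuinely lies.
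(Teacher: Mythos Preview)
Your diagnosis of where the extra $\log n$ enters is correct, but your proposed fix has a genuine gap. The hope that $\sum_{i=1}^n \dsFunc{\intersections}{n/i} = O\pth{\dsFunc{\intersections}{n}}$ is simply false: already for $\intersections=2$ (where $\dsFunc{2}{m}=\Theta(m)$) this sum is $\sum_i \Theta(n/i) = \Theta(n\log n)$, so no sharper evaluation of that series will save the logarithm. Your fallback intuition that the two $\log n$ factors come from ``the same randomness'' is not what the paper exploits either, and you do not turn that intuition into a concrete argument.

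The paper's actual trick is different and you do not arrive at it: instead of building the full overlay $\Arr_n$, build only $\Arr_{n/t}$ for $t=\log n$. By \lemref{overlay:2} its expected complexity is $O\pth{\dsFunc{\intersections}{n/t}\log(n/t)}$, and since $\dsFunc{\intersections}{n/t}\le \dsFunc{\intersections}{n}/t$, this is $O\pth{\dsFunc{\intersections}{n}}$ --- the overlay $\log$ is absorbed by using fewer sites. Now for each vertical-decomposition cell $\Delta$ of $\Arr_{n/t}$, all points in $\Delta$ share the same candidate set with respect to $\SiteSet_{n/t}$ (size $O(\log n)$ by \corref{forAll}), and the only later sites that can enlarge any candidate set inside $\Delta$ are those in the stopping set $\KillSet{\Delta}$. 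Hence the weighted diagram inside $\Delta$ has complexity $O\pth{\worstX{\cardin{\KillSet{\Delta}}+\log n}}$, and Clarkson--Shor (\thmref{moments}) applied with $\worstX{\cdot}$ as the polynomially growing function gives
\[
\Ex{\sum_{\Delta} \worstX{\cardin{\KillSet{\Delta}}+\log n}} = O\pth{\Ex{\cardin{\Arr_{n/t}}}\,\worstX{t+\log n}} = O\pth{\dsFunc{\intersections}{n}\,\worstX{\log n}}.
\]
The missing idea in your plan is precisely this early stopping at $n/\log n$ sites combined with a second application of Clarkson--Shor to the stopping sets of the coarser overlay.
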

\begin{proof}
    Adopting previously used notation, let $\SiteSet$ be a randomly
    weighted set of sites in the plane, whose ordering by increasing
    weight is $\OSiteSet = \permut{\site_1, \ldots, \site_n}$, and let
    $\VCellPrefix{i}$ denote the Voronoi cell of $\site_i$ in the
    unweighted Voronoi diagram of $\SiteSet_i$. Let $\Arr_i$ denote
    the overlay arrangement of the regions $\VCellPrefix{1}, \ldots,
    \VCellPrefix{i}$.  Now, $\OSiteSet_i = \permut{\site_1, \ldots,
       \site_i}$ is a random permutation of $\SiteSet_i$, and
    \lemref{overlay:2} implies that the expected complexity of
    $\Arr_i$ is $O(\dsFunc{\intersections}{i} \log i)$ for any $i\leq
    n$.
    
    Consider the arrangement $\Arr_{n/t}$, determine by the first
    $n/t$ sites, where $t$ is parameter to be determined shortly.
    Just as in the proof of \thmref{sitesTheorem}, consider the
    arrangement $\ArrVDX{n/t}$ formed by vertical decomposition of
    $\Arr_{n/t}$. The vertical decomposition increases the complexity
    only by a constant factor, and thus the expected number of
    vertical trapezoids is
    $O(\dsFunc{\intersections}{n/t} \log (n/t))$ (where the
    expectation is over the ordering $\OSiteSet_{n/t}$ of
    $\SiteSet_{n/t}$).  Moreover, each cell (i.e., vertical trapezoid)
    is defined by a constant number of sites from $\SiteSet$ --
    specifically, a site is in the stopping set of a trapezoid if when
    added to the sample its Voronoi cell intersects the trapezoid.
    
    So consider a cell $\Delta$ in the arrangement $\ArrVDX{n/t}$.  By
    \lemref{constant}, with respect to the set $\SiteSet_{n/t}$, all
    points in $\Delta$ have the same candidate set.  However, as sites
    in $\Suffix{n/t+1}$ are added candidate sets of different points
    in $\Delta$ may diverge.  Clearly this can only happen when for
    some $j>n/t$, $\VCellPrefix{j}$ intersect $\Delta$, in other
    words, when $\site_j$ is in the stopping set $\KillSet{\Delta}$ of
    $\Delta$.

    Therefore, the union of the final candidate sets over all points
    in $\Delta$ has size $O(\KillSet{\Delta}+\log n)$, since all
    points had the same candidate set with respect to $\SiteSet_{n/t}$
    (which has size $O(\log n)$ by \corref{forAll}), and can only
    differ on the set $\KillSet{\Delta}$.  Since the worst case
    complexity of a weighted Voronoi diagram of $m$ sites is
    $\worstX{m}$, this implies the total complexity of the weighted
    Voronoi diagram in the cell $\Delta$, formed by the candidate list
    and stopping set of $\Delta$, is
    $O\pth{\Bigl.\worst\pth{\bigl.\cardin{ \KillSet{\Delta}}+\log
          n}}$.  Now we can apply \thmref{moments} to bound the sum of
    this quantity over all cells in the vertical decomposition of
    $\ArrVDX{n/t}$. Specifically, setting $t=\log n$, we have
    \begin{align*}
        \Ex{\Bigl.\,\smash{\sum_{\,\Delta\in \ArrVDX{n/t}}}
           \worstX{\bigl.  \cardin{\KillSet{\Delta}}+\log n } }%
        &%
        =%
        O\pth{%
           \Biggl.  \Ex{ \Bigl.\bigl| \ArrVDX{n/t} \bigr|}
           \worstX{\bigl. t+\log n}%
        }%
        \DCGVer{\\&}%
        =%
        O\pth{\Bigl.\dsFunc{\intersections}{\frac{n}{t}} \log
           \pth{\frac{n}{t}}
           \worstX{\bigl. t+\log n}    }\\
        &=%
        O\pth{\Bigl.\dsFunc{\intersections}{\frac{n}{\log n}} \log
           \pth{n} \, \worstX{\bigl. \log n} }%
        \DCGVer{\\&}%
        =%
        O\pth{\Bigl. \dsFunc{\intersections}{n } \worstX{\bigl. \log
              n} },
    \end{align*}
    as $\worstX{m}$ is a polynomially growing function, and using
    $\dsFunc{\intersections}{n/t} \leq \dsFunc{\intersections}{n}/t$.
    \DCGVer{{\qed}}
\end{proof}

\begin{corollary}
    \corlab{pointsCor}%
    Let $\PntSet$ be a set of $n$ points in the plane, where for each
    point we independently sample a weight from some distribution
    $\pdf$.  Then, the expected complexity of the multiplicative
    Voronoi diagram of $\PntSet$ is $O\pth{ n\log^2 n }$.
\end{corollary}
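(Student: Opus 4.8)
The plan is to instantiate \thmref{tighter} in the special case of point sites, so that it only remains to pin down the constant $\intersections$ and the worst-case growth function $\worstX{\cdot}$ in this setting. First I would check that a set of points satisfies the conditions of \remref{assumptions} and \remref{assumptions:2}: the unweighted bisector of two points is a line, while a multiplicatively weighted bisector of two points is either a line or an Apollonius circle. In all cases such a bisector is a simple curve splitting the plane into two unbounded regions, has constant complexity, and two such curves (each a line or a circle) meet in at most two points. Consequently, tracing the argument in \corref{intersections}, along any fixed bisector $\ell$ two of the relevant distance functions cross at most a constant number of times (each crossing lies on the bisector of the two defining sites, a line or circle, which meets $\ell$ in at most two points). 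Hence \lemref{overlay:2}, and therefore \thmref{tighter}, applies with $\intersections = 2$.

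Second, I would invoke the two quantitative facts. Since a Davenport--Schinzel sequence of order two (that is, one avoiding an alternation $abab$) on $n$ symbols has length at most $2n-1$, we have $\dsFunc{\intersections}{n} = \dsFunc{2}{n} = O(n)$ (see \cite{sa-dsstg-95}). And the worst-case complexity of the multiplicative weighted Voronoi diagram of $m$ points in the plane is $\worstX{m} = O(m^2)$ by Aurenhammer and Edelsbrunner \cite{ae-oacwv-84}; this is a polynomially growing function, as \thmref{tighter} requires, and so $\worstX{\log n} = O(\log^2 n)$. Plugging both bounds into \thmref{tighter} gives that the expected complexity of the multiplicative Voronoi diagram of $\PntSet$ is $O\pth{\dsFunc{\intersections}{n}\,\worstX{\log n}} = O\pth{n \log^2 n}$, which is the claimed bound.

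The only point needing care is the value of $\intersections$: one must remember that it is controlled by the number of intersections between two \emph{unweighted} bisectors (those defining the overlay arrangement $\Arr$ in \lemref{overlay:2}), and verify that a line-or-circle meets a line-or-circle in at most two points, so that $\intersections = 2$ and $\dsFunc{\intersections}{n}$ is genuinely linear, rather than the slightly superlinear bound one obtains for larger $\intersections$ in the general case. Everything else is a direct substitution into \thmref{tighter}, and no new probabilistic argument is required.
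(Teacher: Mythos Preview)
Your proposal is correct and matches the paper's intended argument: the corollary is stated without proof immediately after \thmref{tighter}, as a direct instantiation with $\worstX{m}=O(m^2)$ from \cite{ae-oacwv-84} and $\dsFunc{\intersections}{n}=O(n)$. One harmless over-count: the \emph{unweighted} bisectors that govern $\intersections$ in \corref{intersections} and \lemref{overlay:2} are straight lines for point sites, so $\intersections=1$ already suffices (the Apollonius circles arise only in the weighted diagram), but your choice $\intersections=2$ still yields $\dsFunc{2}{n}=O(n)$ and the same final bound.
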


\subsubsection{Sampling versus Permutation}
\seclab{permutation}

The arguments used throughout this paper did not require that weights
were randomly sampled, but rather that they were randomly permuted.  A
similar observation was made by Agarwal
\etal~\cite{ahks-urmsn-14}. Specifically, we have the following
analogous lemma to \corref{pointsCor} (a similar lemma holds for more
general sites).

\begin{lemma}
    Let $W = \brc{\weight_1, \ldots, \weight_n}$ be a set of
    non-negative real weights and
    $\PntSet = \brc{ \pnt_1, \ldots, \pnt_n }$ a set of points in the
    plane.  Let $\sigma$ be a (uniformly) random permutation from the
    set of permutations on $\brc{1, \ldots, n }$.  If for all $i$ we
    assign $\weight_{\sigma(i)}$ to point $\pnt_i$, then the expected
    complexity of the resulting multiplicative Voronoi diagram of
    $\PntSet$ is $O\pth{ n\log^2 n }$.
\end{lemma}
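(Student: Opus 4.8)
The plan is to observe that nothing in the chain of arguments culminating in \corref{pointsCor} actually used that the weights were \emph{sampled} from a distribution $\pdf$; every probabilistic statement was funneled through the single fact that, when the sites are sorted by increasing weight, the resulting ordering $\OSiteSet$ is a \emph{uniformly random permutation} of the sites. Concretely, \lemref{influence} and \lemref{constant} are deterministic statements about a fixed ordering, \corref{forAll} invokes \lemref{high:prob} which is purely a statement about random permutations, and in \lemref{overlay:2} (hence \thmref{tighter}) the Clarkson--Shor sample $\SiteSet_i$ is the unordered length-$i$ prefix of a random permutation. So the whole argument is really an argument about random permutations of sites, and the permutation model is literally a special case. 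The first step is therefore to make this reduction precise.

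First I would treat the case where the weights $\weight_1, \ldots, \weight_n$ are pairwise distinct. After assigning $\weight_{\sigma(i)}$ to $\pnt_i$, sorting $\PntSet$ by increasing assigned weight yields the ordering $\OSiteSet = \permut{\pnt_{\sigma^{-1}(1)}, \ldots, \pnt_{\sigma^{-1}(n)}}$, i.e. the index sequence read off through $\sigma^{-1}$. Since $\sigma$ is a uniformly random permutation of $\brc{1,\ldots,n}$, so is $\sigma^{-1}$, and hence $\OSiteSet$ is a uniformly random permutation of $\PntSet$. This is exactly the hypothesis under which \thmref{tighter} was proved, and plugging in $\worstX{m} = O(m^2)$ for point sites (as in \corref{pointsCor}) gives the claimed $O(n\log^2 n)$ bound on the expected complexity of the multiplicative diagram.

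Second, for weights that are not distinct I would reuse the tie-breaking device already introduced in \secref{prelims}: attach to each point an auxiliary value drawn independently and uniformly from $[0,1]$, and break ties in the induced ordering lexicographically on the pair (assigned weight, auxiliary value). This perturbation does not alter the multiplicative Voronoi diagram at all, since the auxiliary values never enter the distance functions $\func_i$; they only serve to fix which random permutation the analysis is applied to. With probability one the lexicographic order is a total order, and conditioning on the auxiliary values it is again a uniformly random permutation of $\PntSet$, so the bound from the distinct-weights case carries over verbatim. The analogous statement for more general sites follows in the same way, using \thmref{tighter} with the appropriate $\worstX{\cdot}$. The only (mild) obstacle is the bookkeeping in the first paragraph --- verifying that each lemma in the pipeline really depends on the weights only through the induced permutation --- after which there is essentially nothing left to prove.
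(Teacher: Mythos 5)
Your proposal is correct and is essentially the paper's own argument: the paper states this lemma with only the observation that ``the arguments used throughout this paper did not require that weights were randomly sampled, but rather that they were randomly permuted,'' which is precisely the reduction you make precise (including the tie-breaking device for repeated weights, already introduced in the preliminaries). No gap; your write-up is, if anything, more careful than the paper's one-line justification.
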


\subsubsection{If the locations are sampled}
\seclab{locations}

Consider the alternative problem where one is given a set of points
with fixed weights and one then randomly samples the location of each
point.  It is not hard to see that this is equivalent to first
randomly sampling locations of points, and then randomly permuting the
weights among the locations.  This implies the following corollary.

\begin{corollary}
    \corlab{square}%
    Let $\PntSet = \brc{\pnt_1, \ldots, \pnt_n}$ be a set of points
    with an associated set of weights
    $W = \brc{\weight_1, \ldots, \weight_n}$ such that
    $\weight(\pnt_i)=\weight_i$.  If for all $i$ one picks the
    location of $\pnt_i$ uniformly at random from the unit square,
    then the expected complexity of the multiplicative Voronoi diagram
    is $O\pth{ n\log^2 n }$.
\end{corollary}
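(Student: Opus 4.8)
The plan is to reduce \corref{square} to the permutation-model lemma stated in \secref{permutation}, exploiting the observation (made in the text just above) that the location-sampling model is \emph{distributionally} identical to the permutation model. Concretely, I would: (i)~show that the random weighted point set produced by ``sample each location independently and uniformly on the unit square, then give $\pnt_i$ the fixed weight $\weight_i$'' has exactly the same law as the one produced by ``sample each location independently and uniformly, then distribute the weights $\weight_1, \ldots, \weight_n$ among the sampled locations via a uniformly random permutation''; (ii)~condition on the realized locations, so that the conditional law of the diagram is precisely the one analyzed in \secref{permutation}; and (iii)~average over the locations.

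For step~(i), the key point is that the diagram $\WVorX{\cdot}$, together with the randomized tie-breaking rule fixed in \secref{prelims}, depends on its input only through the \emph{unordered} set of (location, weight) pairs. Let $L_1, \ldots, L_n$ be i.i.d.\ uniform on the unit square, so the model of the corollary yields the set $\brc{(L_i, \weight_i) \sep i = 1, \ldots, n}$. Since the $L_i$ are i.i.d., for a uniformly random permutation $\sigma$ drawn independently of them we have $(L_{\sigma(1)}, \ldots, L_{\sigma(n)}) \stackrel{d}{=} (L_1, \ldots, L_n)$; hence $\brc{(L_i, \weight_i)} \stackrel{d}{=} \brc{(L_{\sigma(i)}, \weight_i)} = \brc{(L_j, \weight_{\sigma^{-1}(j)}) \sep j = 1, \ldots, n}$, and since $\sigma^{-1}$ is again uniformly random the last set is exactly the weighted configuration of the permutation model. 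Thus the two induced distributions over diagrams agree.

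For steps~(ii)--(iii) I would use the tower property, $\Ex{\cardin{\WVorX{\PntSet}}} = \Ex{ \Ex{\cardin{\WVorX{\PntSet}} \sep L_1, \ldots, L_n}}$. On the probability-one event that $L_1, \ldots, L_n$ are pairwise distinct and in general position (so they form a legal set of point sites, which automatically satisfy \remref{assumptions} and \remref{assumptions:2}), step~(i) together with the lemma of \secref{permutation} bounds the inner conditional expectation by $O(n \log^2 n)$, uniformly in the locations; the complementary event has probability zero and contributes nothing. Averaging then gives $\Ex{\cardin{\WVorX{\PntSet}}} = O(n \log^2 n)$. I do not expect a genuine obstacle here: the only place needing care is the bookkeeping in step~(i) --- namely, making precise that the diagram is a function of the unordered weighted point set, and that the $O(n \log^2 n)$ bound coming from \secref{permutation} (via \thmref{tighter}) carries an absolute constant that does not depend on the point locations, so it passes through the outer expectation unchanged. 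The latter holds because for point sites the bisectors are always lines or Apollonius circles, so the hypotheses feeding \thmref{tighter} are met by every set of distinct points with the same universal constants.
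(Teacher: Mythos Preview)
Your proposal is correct and follows exactly the approach the paper takes: the paper simply observes (in the sentence preceding the corollary) that sampling locations with fixed weights is distributionally the same as sampling locations and then randomly permuting the weights, and then invokes the permutation lemma of \secref{permutation}. Your steps~(i)--(iii) spell this out more carefully than the paper does, but the argument is the same.
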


\begin{remark}
    It is likely that one can improve the bound in \corref{square}.
    Specifically, we are not using that the locations are sampled, but
    merely that the weights are permuted across the points.  In
    particular, for this special case it is likely one can improve the
    bound of Kaplan \etal \cite{krs-omdri-11} for the overlay
    complexity of the unweighted cells.
\end{remark}

\subsection{Applications}

For the following applications of \thmref{tighter}, the work of Sharir
\cite{s-atubl-94} implies the bound
$\worstX{ m } = O\pth{ m^{2+\eps} }$.

\subsubsection{Disjoint Segments}

Let $\SegSet$ be a set of $n$ interior disjoint line segments in the
plane.  The bisector of any two interior disjoint segments in the
plane consists of at most a constant number of pieces, where each
piece is a contiguous part of either a line or parabolic curve.  It is
therefore not hard to argue that $\SegSet$ satisfies all the
requirements on sets of sites from \remref{assumptions} and
\remref{assumptions:2}.

\begin{theorem}
    \thmlab{segMain}%
    Let $\SegSet$ be a set of $n$ interior disjoint segments in the
    plane, where for each segment we independently sample a weight
    from some distribution $\pdf$.  Then, the expected complexity of
    the multiplicative Voronoi diagram of $\SegSet$ is $O\pth{ n
       \log^{2+\eps} n}$.
\end{theorem}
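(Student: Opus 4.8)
The plan is to obtain \thmref{segMain} as an essentially direct instantiation of \thmref{tighter}, so the only real work is verifying that disjoint segments satisfy the hypotheses and then tracking the resulting bound. First I would check that $\SegSet$ meets the conditions of \remref{assumptions} and \remref{assumptions:2}, using as a starting point the structural fact already stated above: the bisector $\bisector(\seg_1,\seg_2)$ of two interior disjoint segments is a concatenation of a constant number of straight and parabolic pieces. From this, conditions (B) and (C) of \remref{assumptions} and (A), (B) of \remref{assumptions:2} are immediate, since a curve built from $O(1)$ lines and parabolas has $O(1)$ $x$-extremal points and two such curves cross $O(1)$ times (this is also what fixes the constant $\intersections$ feeding into \corref{intersections} and \lemref{overlay:2}). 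For condition (A) of \remref{assumptions} I would argue that, because segments are convex and disjoint, each of the two sets $\brc{\query : \distSet{\query}{\seg_1} < \distSet{\query}{\seg_2}}$ and $\brc{\query : \distSet{\query}{\seg_1} > \distSet{\query}{\seg_2}}$ is connected and unbounded, so the bisector is a single unbounded simple curve; for condition (D) I would use that the Voronoi cell of a segment is star-shaped relative to that segment (the segment joining a point of the cell to its nearest point on $\seg$ stays inside the cell), hence simply connected; and (E) is trivial since every point has a nearest segment.

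With the hypotheses verified, \thmref{tighter} yields that the expected complexity of the multiplicative Voronoi diagram of $\SegSet$ is $O\pth{\dsFunc{\intersections}{n}\,\worstX{\log n}}$, where $\worstX{m}$ is the worst-case complexity of the multiplicative weighted Voronoi diagram of $m$ disjoint segments. Next I would plug in the two known bounds. Sharir's result gives $\worstX{m} = O\pth{m^{2+\eps}}$, which is polynomially growing as \thmref{tighter} requires, so $\worstX{\log n} = O\pth{\log^{2+\eps} n}$; and $\dsFunc{\intersections}{n} = O\pth{n \cdot 2^{O((\alpha(n))^{\intersections})}}$, i.e. $n$ times a factor that grows slower than any positive power of $\log n$. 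Multiplying the two, the $2^{O((\alpha(n))^{\intersections})}$ factor is absorbed into an arbitrarily small extra power of $\log n$, so the product is $O\pth{n \log^{2+\eps} n}$ after renaming $\eps$ — legitimate since the statement is quantified over all $\eps>0$. (Compare \corref{pointsCor}: for points $\worstX{m}=O(m^2)$ and one gets the cleaner $O(n\log^2 n)$; for segments the higher-degree bisectors force the extra $\eps$ through $\worstX{\cdot}$.)

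I expect the main obstacle to be the axiom verification rather than the final estimate, and within it the delicate point is condition (A) of \remref{assumptions}: showing the bisector of two disjoint segments is a simple curve whose removal leaves exactly two unbounded components (a Jordan curve through the north pole under stereographic projection). Care is needed near the segment endpoints, where the bisector switches between straight and parabolic arcs; the cleanest route is to prove directly that each ``closer-to'' region is path-connected and unbounded (e.g. via star-shapedness with respect to the corresponding segment), which simultaneously dispatches (A) and (D). Everything else — (B), (C), (E), and the two conditions of \remref{assumptions:2} — then follows routinely from the piecewise algebraic description of the bisectors, and the closing arithmetic is merely the absorption of slowly growing factors.
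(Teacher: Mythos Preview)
Your proposal is correct and follows essentially the same route as the paper: verify that disjoint segments satisfy the axioms of \remref{assumptions} and \remref{assumptions:2} (the paper just asserts this is ``not hard to argue'' from the piecewise line/parabola structure of the bisectors), then invoke \thmref{tighter} together with Sharir's bound $\worstX{m}=O(m^{2+\eps})$ and absorb the Davenport--Schinzel overhead into the $\eps$. If anything, you give more detail on the axiom verification than the paper does.
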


Interpreting the Voronoi diagram as a minimization diagram, taking a
level set corresponds to taking the union of a randomly expanded set
of segments. Therefore, our bound immediately implies a bound of
$O\pth{ n \log^{2+\eps} n }$ on the complexity of the union of such
segments. Recently, Agarwal \etal~\cite{ahks-urmsn-14} proved a better
bound of $O(n \log n)$, but arguably our proof is significantly
simpler.

\subsubsection{Convex Sets}
Let $\mathsf{C}$ be a set of $n$ disjoint convex constant complexity
sets in the plane.  Note this is a clear generalization of the case of
segments, and for this case it is again not hard to verify that such a
set of sites meet all the requirements of \remref{assumptions} and
\remref{assumptions:2}.

\begin{theorem}
    \thmlab{convexMain} Let $\mathsf{C}$ be a set of $n$ interior
    disjoint convex constant complexity sets in the plane, where for
    each set we independently sample a weight from some distribution
    $\pdf$.  Then, the expected complexity of the multiplicative
    Voronoi diagram of $\mathsf{C}$ is $O\pth{ n \log^{2+\eps} n}$.
\end{theorem}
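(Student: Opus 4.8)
The plan is to derive \thmref{convexMain} directly from \thmref{tighter}, exactly as \thmref{segMain} was obtained, so the real work is only to verify that a family $\mathsf{C}$ of $n$ interior disjoint convex constant-complexity sets satisfies all the hypotheses of \remref{assumptions} and \remref{assumptions:2}, and that the worst-case bound $\worstX{m} = O(m^{2+\eps})$ of Sharir \cite{s-atubl-94} applies to such sites.

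First I would check the unweighted conditions of \remref{assumptions}. For two interior disjoint convex sets $\site, \siteA$, the set of points equidistant from them is a single unbounded simple curve (a ``pseudoline'' that separates $\site$ from $\siteA$), whose removal leaves exactly two unbounded regions, namely the points strictly closer to $\site$ and those strictly closer to $\siteA$. Since each site is a constant-complexity convex region, its boundary is a concatenation of $O(1)$ segments and arcs, so the bisector is a concatenation of $O(1)$ pieces, each on a line or a constant-degree algebraic curve; after an infinitesimal rotation removing verticalities this gives $O(1)$ $x$-extremal points on each bisector, and two distinct bisectors, being finite unions of constant-degree algebraic pieces, meet $O(1)$ times by B\'ezout. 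Finally, for disjoint convex sites every (unweighted) Voronoi cell in every subdiagram is simply connected, and the cells cover the plane, so conditions (A)--(E) of \remref{assumptions} hold and each subdiagram is an abstract Voronoi diagram.

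Next I would check \remref{assumptions:2}. For any positive weight assignment the multiplicative bisector of two disjoint convex constant-complexity sites is again a concatenation of $O(1)$ pieces, each a portion of a constant-degree algebraic curve given by an equation of the form $\weight_1 \distSet{\query}{u} = \weight_2 \distSet{\query}{v}$ for boundary features $u$ of $\site$ and $v$ of $\siteA$; hence each weighted bisector has $O(1)$ $x$-extremal points, and two of them intersect $O(1)$ times, again by B\'ezout. With all conditions in place, and since the distance function to a constant-complexity convex set is a constant-degree piecewise-algebraic function so that Sharir's analysis \cite{s-atubl-94} yields $\worstX{m} = O(m^{2+\eps})$, \thmref{tighter} gives expected complexity $O\pth{\dsFunc{\intersections}{n}\, \worstX{\log n}} = O\pth{\dsFunc{\intersections}{n}\, \log^{2+\eps} n}$. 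As $\intersections$ is a constant, $\dsFunc{\intersections}{n} = n \cdot 2^{O(\alpha(n)^{\intersections})}$ is near linear; in particular $\dsFunc{\intersections}{n} = O(n \log^{\eps} n)$, so the product is $O(n \log^{2+\eps} n)$ after renaming $\eps$, which is the claimed bound.

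The main obstacle is the geometric bookkeeping of the first two paragraphs --- in particular establishing that the unweighted bisector of two disjoint convex sets is a single unbounded simple curve splitting the plane into two unbounded simply-connected pieces, and that the induced Voronoi cells are simply connected, so that \remref{assumptions} genuinely applies. Once this and the routine algebraic-degree estimates are done, the theorem is immediate from \thmref{tighter}; note that since segments are degenerate convex sets, the same argument also re-derives \thmref{segMain}.
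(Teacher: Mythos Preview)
Your proposal is correct and follows exactly the approach the paper takes: the paper simply remarks that disjoint convex constant-complexity sets satisfy the conditions of \remref{assumptions} and \remref{assumptions:2}, invokes Sharir's bound $\worstX{m}=O(m^{2+\eps})$, and reads off the result from \thmref{tighter}. You have in fact supplied more detail on the bisector verification than the paper does, and your final absorption of the near-linear Davenport--Schinzel factor into the $\eps$ in the exponent is precisely what the stated bound $O(n\log^{2+\eps} n)$ requires.
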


Again interpreting the Voronoi diagram as a minimization diagram, this
immediately implies a bound of $O\pth{ n \log^{2+\eps} n }$ on the
complexity of the union of a set of such randomly expanded convex
sets. Agarwal \etal~\cite{ahks-urmsn-14} proved a bound of
$O(n^{1+\eps})$ for any fixed $\eps>0$, and as such the above bound is
an improvement.


\section{Conclusions}

In this paper, we presented a general technique to provide an expected
near linear bound on the combinatorial complexity of a large class of
multiplicative Voronoi diagrams, when the weights are sampled
randomly, which have quadratic complexity in the worst case.  Several
specific applications of the technique were listed, but there should
probably be more of such applications.  There is also some potential
to improve the bounds in the paper.  For example, one can likely use
the uniform distribution of the points to improve the result in
\corref{square}.  However, we conjecture that, in the worst case, 
the expected complexity should still be super linear, 
and we provide some justification for this 
conjecture in \apndref{lower:bound}.

In order to achieve our bounds we introduced the notation of candidate
sets, which induce a planar partition into uniform candidate regions.
Recently, we considered this partition as a diagram of independent
interest \cite{chr-fpuvp-14}.  Generalizing to allow each site to have
multiple weights, this one diagram captures the relevant information
for multi-objective optimization, i.e. this one diagram implies bounds
on various weighted generalizations of Voronoi diagrams.  Moreover, by
extending the techniques of the current paper, we provide a similar
bounds on the expected complexity of this diagram \cite{chr-fpuvp-14}.



\paragraph{Acknowledgments.}

The authors would like to thank Pankaj Agarwal, Jeff Erickson, Haim
Kaplan, Hsien-Chih Chang, and Micha Sharir for useful discussions. In
particular, the work of Agarwal, Kaplan, and Sharir
\cite{aks-urmsn-13,ahks-urmsn-14} was the catalyst for this work. In
addition, we thank Pankaj Agarwal for pointing out a simple way to
slightly improve our bound, specifically the result in
\thmref{tighter}.  The authors would also like to thank the reviewers
for their insightful comments.
\DCGVer{%
   Work on this paper was partially supported by NSF AF awards
   CCF-0915984, CCF-1217462, and CCF-1421231. 
   A preliminary version of this paper appeared in SoCG 2014
   \cite{hr-ocrwv-14}.%
}


\DCGVer{%
 \providecommand{\CNFX}[1]{ {\em{\textrm{(#1)}}}}
  \providecommand{\tildegen}{{\protect\raisebox{-0.1cm}{\symbol{'176}\hspace{-0.03cm}}}}
  \providecommand{\SarielWWWPapersAddr}{http://sarielhp.org/p/}
  \providecommand{\SarielWWWPapers}{http://sarielhp.org/p/}
  \providecommand{\urlSarielPaper}[1]{\href{\SarielWWWPapersAddr/#1}{\SarielWWWPapers{}/#1}}
  \providecommand{\Badoiu}{B\u{a}doiu}
  \providecommand{\Barany}{B{\'a}r{\'a}ny}
  \providecommand{\Bronimman}{Br{\"o}nnimann}  \providecommand{\Erdos}{Erd{\H
  o}s}  \providecommand{\Gartner}{G{\"a}rtner}
  \providecommand{\Matousek}{Matou{\v s}ek}
  \providecommand{\Merigot}{M{\'{}e}rigot}
  \providecommand{\CNFSoCG}{\CNFX{SoCG}}
  \providecommand{\CNFCCCG}{\CNFX{CCCG}}
  \providecommand{\CNFFOCS}{\CNFX{FOCS}}
  \providecommand{\CNFSODA}{\CNFX{SODA}}
  \providecommand{\CNFSTOC}{\CNFX{STOC}}
  \providecommand{\CNFBROADNETS}{\CNFX{BROADNETS}}
  \providecommand{\CNFESA}{\CNFX{ESA}}
  \providecommand{\CNFFSTTCS}{\CNFX{FSTTCS}}
  \providecommand{\CNFIJCAI}{\CNFX{IJCAI}}
  \providecommand{\CNFINFOCOM}{\CNFX{INFOCOM}}
  \providecommand{\CNFIPCO}{\CNFX{IPCO}}
  \providecommand{\CNFISAAC}{\CNFX{ISAAC}}
  \providecommand{\CNFLICS}{\CNFX{LICS}}
  \providecommand{\CNFPODS}{\CNFX{PODS}}
  \providecommand{\CNFSWAT}{\CNFX{SWAT}}
  \providecommand{\CNFWADS}{\CNFX{WADS}}

}

\NotDCGVer{%
 \providecommand{\CNFX}[1]{ {\em{\textrm{(#1)}}}}
  \providecommand{\tildegen}{{\protect\raisebox{-0.1cm}{\symbol{'176}\hspace{-0.03cm}}}}
  \providecommand{\SarielWWWPapersAddr}{http://sarielhp.org/p/}
  \providecommand{\SarielWWWPapers}{http://sarielhp.org/p/}
  \providecommand{\urlSarielPaper}[1]{\href{\SarielWWWPapersAddr/#1}{\SarielWWWPapers{}/#1}}
  \providecommand{\Badoiu}{B\u{a}doiu}
  \providecommand{\Barany}{B{\'a}r{\'a}ny}
  \providecommand{\Bronimman}{Br{\"o}nnimann}  \providecommand{\Erdos}{Erd{\H
  o}s}  \providecommand{\Gartner}{G{\"a}rtner}
  \providecommand{\Matousek}{Matou{\v s}ek}
  \providecommand{\Merigot}{M{\'{}e}rigot}
  \providecommand{\CNFSoCG}{\CNFX{SoCG}}
  \providecommand{\CNFCCCG}{\CNFX{CCCG}}
  \providecommand{\CNFFOCS}{\CNFX{FOCS}}
  \providecommand{\CNFSODA}{\CNFX{SODA}}
  \providecommand{\CNFSTOC}{\CNFX{STOC}}
  \providecommand{\CNFBROADNETS}{\CNFX{BROADNETS}}
  \providecommand{\CNFESA}{\CNFX{ESA}}
  \providecommand{\CNFFSTTCS}{\CNFX{FSTTCS}}
  \providecommand{\CNFIJCAI}{\CNFX{IJCAI}}
  \providecommand{\CNFINFOCOM}{\CNFX{INFOCOM}}
  \providecommand{\CNFIPCO}{\CNFX{IPCO}}
  \providecommand{\CNFISAAC}{\CNFX{ISAAC}}
  \providecommand{\CNFLICS}{\CNFX{LICS}}
  \providecommand{\CNFPODS}{\CNFX{PODS}}
  \providecommand{\CNFSWAT}{\CNFX{SWAT}}
  \providecommand{\CNFWADS}{\CNFX{WADS}}

}


\appendix

\section{Lower bound on the overlay complexity of %
   Voronoi cells in \RIC}
\apndlab{lower:bound}

Kaplan \etal \cite{krs-omdri-11} provided an example showing that in
the randomized incremental construction of the lower envelope of
planes in $3d$, the overlay of the cells being computed in the
minimization diagram has expected complexity $\Omega( n \log
n)$. Their example however is not realizable by a Voronoi
diagram. Here we provide a direct example showing the $\Omega(n \log
n)$ lower bound for the overlay of Voronoi cells in the randomized
incremental construction.

Because of the following lemma, we conjecture that, in the worst case,
the true complexity of the quantity bounded in \thmref{points} is
super linear. We leave this as open problem for further research.

\begin{lemma}
    For $n$ sufficiently large, there is a set of $2n$ points in the
    plane such that the overlay of the Voronoi cells computed in the
    randomized incremental construction of the Voronoi diagram has
    expected complexity $\Omega(n \log n)$.
\end{lemma}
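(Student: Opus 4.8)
The plan is to lower bound the number of \emph{crossing vertices} of the overlay --- points where the boundaries of two cells $\VCellPrefix{i}$ and $\VCellPrefix{j}$ properly cross --- since each such point is a distinct vertex of $\Arr$; it then suffices to produce $\Omega(n\log n)$ of them in expectation for an explicit configuration. I would place $n$ ``left'' sites $\ell_k=(-T,k)$ and $n$ ``right'' sites $r_k=(T,k+\tfrac12)$ for $k=0,\dots,n-1$, where $T$ is a sufficiently large polynomial in $n$, after an infinitesimal generic perturbation to enforce the general position assumptions. Since $T$ dwarfs the vertical extent $n$, every left--right bisector is a nearly vertical line crossing the thin strip $|x|\le n^{2}/T\ll 1$, and for any point with $|x|\ll 1$ its nearest site in \emph{any} subset of $\SiteSet$ is, up to lower‑order terms, the site whose height is closest. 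Thus near $x=0$ the two groups interact like a one–dimensional nearest‑neighbor problem in the height coordinate, and it is the $\Theta(\log m)$ bound on the number of records in a random sequence, summed over the steps, that supplies the extra logarithmic factor.

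Fix a step $i$, condition on the unordered prefix $\SiteSet_i$, and condition on $\site_i=\ell_k$ being a left site, an event of probability $\tfrac12$; with constant probability the height‑Voronoi interval of $\ell_k$ among the $i$ heights of $\SiteSet_i$ has width $\Omega(n/i)$, as those $i$ heights are spread over an interval of length $n$. The portion of $\partial\VCellPrefix{i}$ facing the right sites is then a nearly vertical chain near $x=0$ spanning a height range of length $\Omega(n/i)$. For each right site $r_m$ whose height lies in that range and which is inserted \emph{after} step $i$ --- of the $\Omega(n/i)$ candidates only $O(1)$ are expected to fall among the first $i$ sites, so $\Omega(n/i)$ qualify --- the cell $\VCellPrefix{j}$ of $r_m$ at its insertion time reaches leftward past $x=0$ around the height of $r_m$, so $\partial\VCellPrefix{j}$ enters and exits the interior of $\VCellPrefix{i}$ and contributes at least one crossing on this chain. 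Distinct such $r_m$ produce crossings in essentially disjoint height sub‑intervals, hence at distinct points, and crossings obtained at different steps $i$ lie on boundaries of different cells, so nothing is double counted. Therefore
\[
\Ex\bigl[\cardin{\Arr}\bigr]\;\ge\;\Ex\bigl[\#\text{crossings}\bigr]\;\ge\;\sum_{i=1}^{2n}\tfrac12\cdot\Omega\!\pth{\tfrac{n}{i}}\;=\;\Omega(n\log n).
\]

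The step I expect to be the main obstacle is making these geometric assertions precise: confirming that the right‑facing chain really has the claimed location and height span (handling the low‑probability event that few right sites are present early, and the mild tilt of the left--right bisectors), that each eligible later right cell genuinely crosses the chain rather than merely abutting it, and that crossings charged to different right sites are distinct --- the last point may force one to restrict attention to right sites inserted in a window such as $[2i,4i]$, so that their height‑Voronoi intervals are shorter than the unit spacing between consecutive right heights. The remaining ingredient, concentration of the width $\Omega(n/i)$ and of the number of eligible right sites for a random $i$‑subset of heights spread over $[0,n]$, is a routine computation.
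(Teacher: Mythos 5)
Your proposal is, up to a $90^\circ$ rotation (two vertical columns at $x=\pm T$ instead of the paper's two horizontal rows at $y=\pm\Delta$), the same construction the paper uses, and the core argument is the same: with constant probability the site inserted at step $i$ has a wide Voronoi window of width $\Theta(n/i)$ along the central separating line (what the paper packages as the ``isolated'' event), and the cells of the $\Theta(n/i)$ sites later inserted in that window each contribute a forward crossing with $\partial\VCellPrefix{i}$, giving $\sum_i \Omega(n/i)=\Omega(n\log n)$ while avoiding double counting by charging only future crossings. The details you flag as needing care (bounding the isolation probability by a counting argument, ensuring the step index stays in a range like $j\le n/20$, and conditioning on both groups being represented among the first $O(\log n)$ insertions) are exactly the technical points the paper's proof supplies.
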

\begin{proof}
    Let $\PntSet$ be a set of $2n$ points, where the $i$\th point is
    $\pnt_i = (i, -\Delta)$ and the $(n+i)$\th point is $\pntA_i = (i
    , +\Delta)$, for $i=1,\ldots, n$, where $\Delta$ is a sufficiently
    large number, say $10n^3$. Let $\OSiteSet =\permut{\site_1,
       \ldots, \site_{2n}}$ be a random permutation of the points of
    $\PntSet$, and let
    \begin{align*}
        \VCellPrefix{i} = \VorCell{\site_i}{\OSiteSet_i},
    \end{align*}
    for $i=1,\ldots, 2n$.
    
    \begin{figure}
        \centerline{%
           \begin{tabular}{cc}
               \includegraphics[page=1]{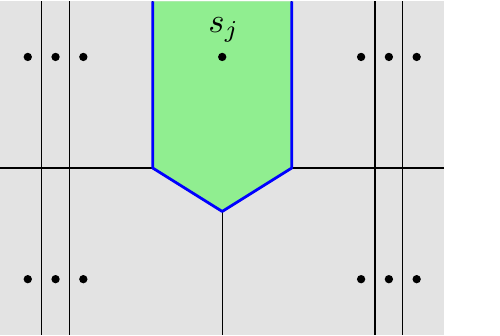} &
               \includegraphics[page=2]{figs/isolated}\\
               (A) & (B)
           \end{tabular}%
        }%
        \caption{(A) The site $\site_j$ is isolated. (B) The overlay
           vertices of the cell $\partial \VCellPrefix{j}$ with the
           boundary of cells created later.  Note the figure is
           vertically not to scale.}
        \figlab{isolated:2}
    \end{figure}
    
    Let $\beta = 10 \ceil{\lg n}$, and let $\Event$ be the event that
    in the first $\beta$ sites, there are sites that belong to both
    the top and bottom row. We have that $\rho = \Prob{\bigl. \Event}
    = 1 - 2\binom{n}{\beta}/\binom{2n}{\beta} \geq 1-2/n^{10}$, as
    $2^\beta \binom{n}{\beta} \leq \binom{2n}{\beta}$.  The $j$\th
    site $\site_j$ (say it is located at $(x_j, \Delta)$) is
    \emphi{isolated}, if none of the points $(x_j-\xi_j, \pm \Delta),
    (x_j -\xi_j +1, \pm \Delta) \ldots, (x_j+\xi_j, \pm \Delta) $ are
    present in the prefix $\OSiteSet_{j-1} = \permut{\site_1, \ldots,
       \site_{j-1}}$, where $\xi_j = \ceil{ n/8j}$.  If a site
    $\site_j$ is isolated, for $j \geq \beta$, then its cell is going
    to be U shaped (assuming $\Event$ happened), ``biting'' a portion
    of the $x$-axis, as $\Delta \gg n$, see \figref{isolated:2} (A).

    The probability of the site $\site_j$ inserted in the $j$\th
    iteration to be isolated is at least a half, since the majority of
    the points not inserted yet are isolated.  Indeed, consider a site
    $i$, for $i < j$, and consider the interval it ``blocks''
    $Z_i = [x_i - \xi_j, x_i + \xi_j]$ from being isolated. That is,
    if $x_j \in Z_i$, then $\site_j$ is not isolated. The total number
    of integer numbers in the intervals $Z_1,\ldots, Z_{j-1}$ is at
    most $\alpha_j = \pth{2\xi_j + 1} (j-1)$, and as such the first
    $j-1$ sites, block at most $2\alpha_j$ sites (that are located
    either on the top or bottom row) from being isolated in the $j$\th
    iteration. As such, we have
    \begin{align*}
        \Prob{\Bigl. \site_j \text{ is not isolated}}%
        &\leq%
        \frac{2\alpha_j}{2n - j}%
        =%
        \frac{2\pth{2\xi_j + 1} (j-1)}{2n-j}%
        \leq%
        \frac{2\pth{\bigl.2\ceil{ \nfrac{n}{8j}} + 1} j}{2n-j}%
        \\&%
        \leq%
        \frac{\pth{ \nfrac{n}{2j} + 6} j}{2n-j}%
        \leq%
        \frac{ {n}/{2} + 6 j}{2n-j}%
        \leq%
        \frac{ (1/2 + 6/20)n }{(2-1/20)n} %
        =%
        \frac{ 16 }{39} %
        \leq%
        \frac{1}{2},
    \end{align*}
    for $j \leq n/20$, and for $n$ sufficiently large.
    
    If $\site_j$ is indeed isolated (and we remind the reader that we
    assume it is located at $(x_j, \Delta)$), then there are no other
    sites (at this stage) in the slab $[x_j - \xi_j, x_j + \xi_j]
    \times [-\infty, +\infty]$. In particular, the interval $I_j =
    \big[ x_j - \xi_j/2, x_j + \xi_j/2\big]$ that lies on the $x$-axis
    is in the interior of the Voronoi cell $\VCellPrefix{j}$ of
    $\site_j$.
    
    This implies that in the final overlay arrangement,
    $\VCellPrefix{j}$ intersects all the cells of the sites
    $\pnt_{x_j-\xi_j/2}$, $ \ldots, \pnt_{x_j+\xi_j/2}$, as their
    cells intersect the interval $I_j$. This in turn implies that
    $\partial \VCellPrefix{j}$ contains at least $2\floor{\xi_j/2}$
    intersection with the boundaries of other cells in the final
    overlay, see \figref{isolated:2} (B). This counts only ``future''
    intersections of $\partial \VCellPrefix{j}$ with the boundaries of
    cells created later. In addition, a tiny perturbation in the
    locations of the sites, guarantees that the boundary of
    $\VCellPrefix{j}$, does not lie on the boundary of any other
    Voronoi cell being created in this process.  As such, an overlay
    vertex is being counted only once by this argument.
    
    We conclude that the expected complexity of the overlay is
    \begin{align*}
        &\geq%
        \Prob{\bigl. \Event} \sum_{j=\beta+1}^{n/20}
        2\floor{\bigl. \xi_j/2}%
        \Prob{\bigl. \site_j \text{ is isolated}} %
        \geq%
        \frac{1}{2} \cdot \frac{1}{2} \sum_{j=\beta+1}^{n/20}
        2\floor{\Bigl. \ceil{ \bigl. n/8j}/2}%
        \\&%
        \geq%
        \frac{1}{2} \sum_{j=\beta+1}^{n/20} \floor{\Bigl.  n/16j}%
        \geq%
        \frac{n}{64} \sum_{j=\beta+1}^{n/20} \frac{1}{j}%
        =%
        \Omega\pth{ \bigl. n \log n }. %
        \DCGVer{\tag*{\qed}}
    \end{align*}
\end{proof}


\end{document}